\theoremstyle{theorem}
\newtheorem{theorem}{Theorem}[section]
\newtheorem{lemma}{Lemma}[section]
\newtheorem{example}{Example}[section]
\newtheorem{assumption}{Assumption}[section]
\newtheorem{definition}{Definition}[section]
\newcommand{\R}{\ensuremath{\mathbf{R}}}
\newcommand{\Z}{\ensuremath{\mathbf{Z}}}
\newcommand{\E}{\ensuremath{\mathds{E}}}
\newcommand{\one}{\ensuremath{\mathds{1}}}
\newcommand{\dx}{\ensuremath{\mathrm{d}}}
\DeclareMathOperator*{\argmin}{arg\,min}
\begin{document}

\title{High-dimensional mixed-frequency IV regression\footnote{First draft: October 2014. This paper is a substantially revisited Chapter 2 of my Ph.D. thesis. I'm deeply indebted to my advisor Jean-Pierre Florens and other members of my Ph.D. committee: Eric Gautier, Ingrid van Keilegom, and Timothy Christensen for helpful discussions and suggestions. This paper was presented at “Conference on Inverse Problems in Econometrics” at Northwestern University, "ENTER exchange seminar" at Tilburg University, "48èmes Journées de Statistique de la SFdS" in Montpellier, "3rd ISNPS Conference" in Avignon, and "Recent Advances in Econometrics" conference at TSE. I'm grateful to all participants for interesting discussions, comments, and suggestions, especially to Christoph Breunig, Federico Bugni, Samuele Centorrino, Christophe Gaillac, Eric Gautier, Joel Horowitz, Pascal Lavergne, Robert Lieli, Valentin Patilea, Jeff Racine, Mario Rothfelder, Anna Simoni, and Daniel Wilhelm. I would also like to thank Bruno Biais, Sophie Moinas, and Aleksandra Babii for helpful conversations.}}
\author{
	Andrii Babii\footnote{University of North Carolina at Chapel Hill - Gardner Hall, CB 3305
		Chapel Hill, NC 27599-3305. Email: \href{mailto:babii.andrii@gmail.com}{babii.andrii@gmail.com}.} \\
	\textit{\normalsize UNC Chapel Hill}
}
\date{\today}
\maketitle

\begin{abstract}
	This paper introduces a high-dimensional linear IV regression for the data sampled at mixed frequencies. We show that the high-dimensional slope parameter of a high-frequency covariate can be identified and accurately estimated leveraging on a low-frequency instrumental variable. The distinguishing feature of the model is that it allows handing high-dimensional datasets without imposing the approximate sparsity restrictions. We propose a Tikhonov-regularized estimator and derive the convergence rate of its mean-integrated squared error for time series data. The estimator has a closed-form expression that is easy to compute and demonstrates excellent performance in our Monte Carlo experiments. We estimate the real-time price elasticity of supply on the Australian electricity spot market. Our estimates suggest that the supply is relatively inelastic and that its elasticity is heterogeneous throughout the day.
\end{abstract}

\noindent {\footnotesize \textbf{Keywords}: high-dimensional IV regression, mixed-frequency data, identification, Tikhonov regularization, continuum of moment conditions, real-time price elasticities. \\
\noindent \textbf{JEL Classifications}: C14, C22, C26, C58}

\thispagestyle{empty}

%\newpage

\setcounter{page}{1}

\section{Introduction}
The technological progress over the past decades has made it possible to generate, to collect, and to store new intraday high-frequency time series datasets that are widely available along with the "old" low-frequency data. Indeed, the economic activity occurs in real time and the economic and financial transactions are frequently recorded instantaneously, while the traditional time series data are available at a quarterly, monthly, or sometimes daily frequencies. Ignoring the high-frequency nature of the data leads to the loss of the information through the temporal aggregation and makes it impossible to quantify the economic activity in real time. At the same time, combining the low and the high-frequency datasets allows obtaining more refined measures of the economic activity that can be used subsequently to inform market participants and to guide policies.

In this paper, we introduce a novel high-dimensional mixed-frequency instrumental variable (IV) regression suitable for the datasets recorded at different frequencies. The model connects a low-frequency dependent variable to endogenous covariates sampled from a continuous-time stochastic process. Alternatively, the regressor might be sampled from a continuous-space stochastic process encountered in the spatial data analysis or any other stochastic process indexed by the continuum. This leads to the high-dimensional IV regression with a large number of endogenous regressors.

The high-dimensional mixed-frequency IV regression features several remarkable properties. First, we show that it is possible to identify and to estimate accurately the high-dimensional slope parameter leveraging on a \textit{low-frequency} instrumental variable. In contrast, the point identification in the (high-dimensional) linear IV regression typically relies on the order condition postulating that the number of instrumental variables should be at least as large as the number of endogenous regressors. Second, the mixed frequency IV regression can handle arbitrary large number of endogenous covariates relatively to the sample size without relying on approximate sparsity condition and restrictive tail conditions. Such a remarkable property is possible due to the continuous-time structure of the regressor and the slope parameter. Continuous-time structures is one of the "blessings of dimensionality" according to \cite{donoho2000high}.\footnote{The continuum modeling, the concentration of measure phenomenon, and the dimension asymptotics are the three "blessings of dimensionality", according to \cite{donoho2000high}.} These properties distinguish our model from the ridge IV regression, cf., \cite{carrasco2012regularization} or the high-dimensional IV regression of \cite{belloni2012sparse}.

The high-dimensional mixed-frequency IV regression is an example of ill-posed inverse problem in the sense that the map from the distribution of the data to the slope parameter is not continuous. As a result, we need to introduce some amount of regularization\footnote{The concept of regularization originates from the mathematical literature on ill-posed inverse problems, cf., \cite{tikhonov1943stability} and \cite{tikhonov1963regularization}, see \cite{carrasco2007linear} for a review and further references in econometrics.} to smooth out the discontinuities and to obtain a consistent estimator. In this paper, we focus on the Tikhonov regularization and establish its statistical properties with weakly dependent data. The estimation accuracy of the continuous-time slope parameter depends both on its regularity as well as on the regularity of a certain integral operator.

Our empirical application extends the classical IV estimation of the supply and the demand equations, cf., \cite{wright1928tariff}, to the real-time spot markets. We collect a new dataset using publicly available data and estimate the real-time price elasticity of supply in the Australian electricity spot market. To that end, we leverage on the daily temperature as an instrumental variable that shifts the demand curve and is exogenous for supply shocks. The temperature is a valid instrumental variable since the electricity demand increases in hot and cold times due to cooling and heating needs. Our empirical results reveal that while the supply of the electricity is relatively inelastic, its elasticity is heterogeneous across the day, peaking around 6 pm and dropping subsequently to its lowest value around 4 am.

\paragraph{Contribution and related literature.} Our paper connects several strands of the literature. First, following \cite{ghysels2004midas}, \cite{ghysels2007midas}, and \cite{andreou2010regression}, there is an increasing interest in using datasets sampled at different frequencies in the empirical practice. Most of this literature, with a notable exception for \cite{ghysels2010midas} and \cite{khalaf2017dynamic}, is largely focused on the forecasting problem with mixed-frequency data and does not consider the structural econometric modeling with the instrumental variable approach. The mixed-frequency data typically lead to high-dimensional problems and the dimensionality is controlled using tightly parametrized weight functions, see also \cite{foroni2015unrestricted} for the unrestricted mixed-frequency data models. Our paper has the following features: 1) we introduce a novel instrumental variable regression suitable for the data sampled at mixed frequencies and the structural econometric modeling; 2) we do not rely on a particular parameterization of the weight function; 3) our high-frequency data are generated from the endogenous continuous-time stochastic process and we study the in-fill asymptotics.

Second, we build on insights from literature on the Tikhonov regularization of ill-posed inverse problems in econometrics, see \cite{carrasco2007linear}, \cite{gagliardini2012tikhonov}, and \cite{carrasco2013asymptotic} for comprehensive surveys, and the functional linear IV regression, see \cite{florens2015instrumental}, \cite{benatia2017functional}, and \cite{babii2020a}. In contrast to this literature, we show that it is possible to achieve identification and to estimate accurately the slope parameter relying on a \textit{single} instrumental measured at a \textit{low-frequency} only. The structure of our model is also qualitatively different and leads to the conditional expectation operator that was not previously encountered in the ill-posed inverse problems literature.

Lastly, following the influential work of \cite{belloni2011lasso}, \cite{belloni2012sparse}, and \cite{belloni2014inference}, there is an increasing interest in the estimation and inference with high-dimensional datasets in econometrics.\footnote{The literature on approximately sparse econometric models is vast, see \cite{belloni2018high} for an excellent introduction and further references.} In particular, \cite{belloni2012sparse} propose to use the LASSO to address the problem of many instruments and the nonparametric series estimation of the optimal instrument. Our mixed-frequency IV regression is qualitatively different from the above models and does not impose the approximate sparsity on the high-dimensional slope coefficients, see \cite{babii2019estimation} for a comprehensive treatment of approximately sparse mixed-frequency time series regressions. The problem of the optimal instrument is more challenging in our nonparametric setting and is left for future research, see \cite{florens2015there} for some steps in this direction.

The paper is organized as follows. In section~\ref{section:model}, we present the mixed-frequency IV regression, illustrate several economic examples, and discuss the main identification issues. In section~\ref{section:asymptotics}, we present the Tikhonov-regularized estimator and derive its statistical properties for the weakly dependent time series data. All technical details appear in the appendix. We report on a Monte Carlo study in section~\ref{section:mc} which provides further insights about the validity of asymptotic analysis in finite samples typically encountered in empirical applications. Section~\ref{sec:application} presents an empirical application to the estimation of real-time supply elasticities. Lastly,  section~\ref{section:conclusions} concludes.

\section{Mixed-frequency IV regression}\label{section:model}
The purpose of this section is to introduce the mixed-frequency IV regression and to discuss our identification and estimation strategies.

\subsection{The model}
Econometrician observes $\{(Y_t,Z_t(s_j),W_t):\; t=1,\dots,T,j=0,1,\dots,m \}$, where $Y_t\in\R$ is a low-frequency dependent variable, $Z_t(s_j)$ is a realization of a real-valued continuous-time stochastic process $Z_t = \left\{Z_t(s):\;s\in S\subset\R^d\right\}$, and $W_t\in\R^q$ is a (vector of) low-frequency instrumental variables.\footnote{If $d=1$, then $S\subset \R$ can be interpreted as a time index. More generally, if $d=2$, then $S\subset\R^2$ can be a geographical location (spatial process), and if $d=3$, then $S\subset\R^3$ can denote both the space and the time dimension (spatio-temporal process). Regardless of the dimension $d$, we always refer to $Z_t$ as a continuous-time stochastic process.} The number of high-frequency observations $m$ is left unrestricted and can (potentially) be much larger than the sample size $T$. The mixed-frequency IV regression is described as
\begin{equation*}
	Y_t = \int_ S\beta(s)Z_t(s)\dx s + U_t,\qquad \E[U_t|W_t] = 0,\qquad t=1,\dots,T.
\end{equation*}
Note that since the regressor is sampled from a high-frequency covariate $Z_t$, 

Discretizing the continuous-time equation, we obtain 
\begin{equation*}
	Y_t = \sum_{j=1}^{m}\beta(s_j)Z_t(s_j)(s_{j} - s_{j-1}) + U_t,\qquad \E[U_t|W_t] = 0,\qquad t=1,\dots,T.
\end{equation*}
It is worth stressing that the discretization of the continuous-time model leads to a consistent definition of regression slopes across different frequencies, cf., \cite{sims1971discrete} and \cite{geweke1978temporal}. In contrast, the naive discrete-time regression equation does not impose any normalization and the magnitude of the slope parameter is different across different frequencies.

The following three examples provide several empirical settings where our mixed-frequency IV regression model could be useful.
\begin{example}[Real-time price elasticities]
	Spot markets operate in real-time with commodities traded for immediate delivery. The mixed-frequency IV regression can be used to estimate the real-time elasticities of supply/demand, which is a continuous-time extension of the classical linear IV regression, cf., \cite{wright1928tariff}. In our empirical application, $Y_t$ is the quantity sold at the spot market on a day $t$ and $Z_{t}(s)$ is the equilibrium market price on a day $t$ at time $s$. The market equilibrium leads to the endogeneity problem. Using daily temperatures as a demand shifter, we can identify the real-time price elasticity $\beta$ of the electricity supply. 
\end{example}

\begin{example}[Intraday liquidity]
	In the equilibrium of a seminal \cite{kyle1985continuous} model, $Y_t$ is a daily price change of an asset $t$, $Z_t(s)$ is an order flow imbalance on a day $t$ at time $s$, and $1/\beta$ is a liquidity parameter. The liquidity parameter quantifies the sensitivity of the market price to the imbalance between the supply and the demand. Endogeneity comes from the strategic behavior of informed traders who are likely to distribute orders over time to minimize the impact on prices and the market equilibrium.
\end{example}

\begin{example}[Measurement errors]
	Classical measurement errors in the high-frequency regressor sampled from a continuous-time stochastic process also lead to the endogeneity problem. Such measurement errors are especially pronounced in the high-frequency intraday financial data contaminated by the market microstructure noise, see \cite{zhang2005tale} and \cite{hansen2006realized}.
\end{example}

\subsection{Identification}
To simplify the notation, in this section, we suppress the dependence of $(Y_t,Z_t,W_t)$ on $t$ and write $(Y,Z,W)$, which is well-justified under stationarity. The mixed-frequency IV regression becomes\footnote{Alternatively, if we start from the linear model $Y=\Phi(Z)+U$, where $\Phi:L_2(S)\to \R$ is a continuous linear functional, then by the Riesz representation theorem, we can always write $\Phi(Z) = \langle \beta,Z\rangle$ for a unique slope parameter $\beta\in L_2(S)$. Here and later, $L_2(S)$ denotes the set of real functions on $S$, square-integrable with respect to the Lebesgue measure and the natural inner product $\langle.,.\rangle$, see Appendix for more details on the notation.}
\begin{equation*}
	Y = \int_S \beta(s)Z(s)\dx s + U,\qquad\E[U|W] = 0.
\end{equation*}
The identification in the linear IV regression relies on the uncorrelatedness between the instrumental variable and the unobservables, i.e., $\E[UW]=0$, and the rank condition. The rank condition requires in turn that the number of the instrumental variable matches the dimension of the endogenous covariate. In our settings, the endogenous covariate is a high-dimensional realization of a continuous-time stochastic process, which requires in turn a high-dimensional instrumental variable. Given that the instrumental variable has to be exogenous to the system, this imposes a strong requirement on the instrumental variable.

In contrast, our identification strategy relies on the mean independence exogeneity condition, $\E[U|W]=0$. Assuming that the order of the integration can be interchanged, the exogeneity leads to
\begin{equation}\label{eq:main}
	h(w) \triangleq \E[Y|W=w] = \int_ S\beta(s)\E[Z(s)|W=w]\dx s \triangleq (L\beta)(w),
\end{equation}
where $L:L_2(S)\to L_2(W)$ is an integral operator mapping the unknown slope parameter $\beta$ to the conditional mean function $h$.\footnote{For a random variable $W$, we denote $L_2(W) = \{f:\; \E|f(W)|^2<\infty \}$ with some abuse of notation.} Eq.~\ref{eq:main} is an example of the Fredholm integral equation of type I solving, which is typically known to be ill-posed in the sense that the inverse map from $h$ to $\beta$ is discontinuous, see \cite{carrasco2007linear}.

Our identification strategy relies on the linear completeness property of the distribution of $(Z,W)$. We say that the stochastic process $Z\in L_2(S)$ is \textit{linearly complete}\footnote{The linear completeness condition is significantly weaker than the nonlinear completeness condition typically used in the nonparametric IV literature, cf., \cite{babii2017completeness}.} for $W\in\R^q$ if for all $b\in L_2(S)$ with $\E|\langle Z,b\rangle|<\infty$, we have
\begin{equation*}
	\E\left[\langle Z,b\rangle|W\right] = 0\implies b = 0.
\end{equation*}

\begin{assumption}\label{as:contiv_td}
	The stochastic process $Z$ is linearly complete for $W$.
\end{assumption}
The linear completeness is a generalization of the rank condition imposed in the finite-dimensional linear IV regression and requires that the operator $L$ is injective. Consider another injective operator $M:L_2(W)\to L_2(S)$ such that $(M\varphi)(u) = \E[\varphi(W)\Psi(u,W)]$ for some square-integrable function of the instrumental variable $\Psi$. Applying $M$ to both sides of Eq.~\ref{eq:main} leads to
\begin{equation}\label{eq:main_new}
	r(u) = \E[Y\Psi(u,W)] = \int_{ S}\beta(s)\E[Z(s)\Psi(u,W)]\dx s = (K\beta)(u),
\end{equation}
where $r=Mh$ and $K=ML$ is a new operator. It is more convenient to estimate the slope parameter $\beta$ using the continuum of moment restrictions in Eq.~\ref{eq:main_new}, since it does not involve conditional expectations, nonparametric estimation of which involves additional tuning parameters.\footnote{The problem of estimating a \textit{finite-dimensional} parameter using a continuum of moment conditions is addressed, e.g., in \cite{carrasco2000generalization} and \cite{carrasco2007efficient}.} At the same time, Eq.~\ref{eq:main_new} has the same identifying power as Eq.~\ref{eq:main} provided that the operator $M$ is injective. A large class of instrument functions $\Psi$ that ensure injectivity of $M$ is characterized in \cite{stinchcombe1998consistent}.\footnote{See also an earlier work of \cite{bierens1982consistent} who develops consistent specification tests and the work of \cite{dominguez2004consistent} and \cite{lavergne2013smooth} who develop estimators of \textit{finite-dimensional} parameters based on the Bierens-type trick.} Our default recommendation is the logistic CDF, $\Psi(u,W) = \frac{1}{1 + e^{-u^\top W}}$, which real-valued and bounded.

\section{Tikhonov regularization}\label{section:asymptotics}
In this section, we introduce the Tikhonov-regularized estimator of the slope parameter $\beta$ and study its statistical properties with time series data.

\subsection{Estimator}
Our objective is to estimate the slope parameter $\beta$ using the continuum of moment conditions in Eq.~\ref{eq:main_new}, which requires inverting the operator $K$. Note that the integral operator $K$ has the kernel function $k(s,u)\triangleq \E[Z(s)\Psi(u,W)]$, which is typically square-integrable. Consequently, the operator $K$ is compact and its generalized inverse is not continuous, see \cite{carrasco2007linear}. The operator inversion problem is amplified by the fact that $r$ and $K$ are unobserved and have to be estimated from the data. In this paper, we focus on the Tikhonov-regularized estimator of $\beta$.

Let $(Y_t,Z_t,W_t)_{t=1}^T$ be a stationary sample. The operator $K$ and the function $r$ are estimated using sample means
\begin{equation*}
\begin{aligned}
	\hat r(u) & = \frac{1}{T}\sum_{t=1}^TY_t\Psi(u,W_t),\\
	(\hat K\beta)(u) & = \int_{S}\beta(s)\hat k(s,u)\dx s,\\
	\hat k(s,u) & = \frac{1}{T}\sum_{t=1}^TZ_t(s)\Psi(u,W_t).
\end{aligned}
\end{equation*}
The Tikhonov-regularized estimator solves the following penalized least-squares problem
\begin{equation*}
	\hat{\beta} = \argmin_{b}\left\|\hat Kb - \hat r\right\|^2 + \alpha \|b\|^2,
\end{equation*}
where $\alpha>0$ is a tuning parameter controlling the amount of the regularization and $\|.\|$ is the natural norm on the relevant $L_2$ space. The estimator has a well-known closed-form expression, which resembles the expression of the finite-dimensional ridge regression estimator\footnote{It is well-known that the compact self-adjoint operator has a countable, decreasing to zero sequence of eigenvalues. Tikhonov regularization stabilizes the spectrum of the generalized inverse of the operator $\hat K^*\hat K$, replacing its eigenvalues $\frac{1}{\hat\lambda_j}$ by $\frac{1}{\alpha + \hat\lambda_j}$, see \cite{carrasco2007linear} for more details.}
\begin{equation}\label{eq:tikhonov}
	\hat\beta = (\alpha I + \hat K^* \hat K)^{-1}\hat K^*\hat r,
\end{equation}
where $\hat K^*$ is the adjoint operator to $\hat K$. To compute the adjoint operator, note that for every $\psi\in L_2$, by Fubini's theorem
\begin{equation*}
\begin{aligned}
	\langle \hat K\beta,\psi\rangle & = \int\left(\int\beta(s)\hat k(s,u)\dx s\right)\psi(u)\dx u \\
	& = \int\beta(s) \left(\int\psi(u)\hat k(s,u)\dx u\right)\dx s \\
	& = \langle\beta, \hat K^*\psi\rangle.
\end{aligned}
\end{equation*}
Therefore, the adjoint operator is
\begin{equation*}
	(\hat K^*\psi)(s) = \int\psi(s)\hat k(s,u)\dx u.
\end{equation*}

\subsection{Statistical properties}
To investigate the statistical properties of $\hat\beta$, we introduce several weak-dependence conditions on the underlying stochastic processes. The following definition generalizes the notion of the covariance stationarity to function-valued stochastic processes, see \cite{bosq2012linear} for a comprehensive introduction to the statistical theory of stochastic processes in Hilbert and Banach spaces.
\begin{definition}
	The $L_2(S)$-valued stochastic process $(X_t)_{t\in\Z}$ is covariance stationary if
	\begin{enumerate}
		\item[(i)] the second moment exists: $\sup_{t\in\Z}\E\|X_t\|^2<\infty$;
		\item[(ii)] the mean function is constant over time: $\E[X_t(s)] = \mu(s),\forall s\in S$ and $\forall t\in\Z$;
		\item[(iii)] the autocovariance function depends only on the distance between observations: $\forall s,u\in S$ and $\forall h,k\in\mathbf{Z}$
		\begin{equation*}
		\begin{aligned}
		\gamma_{h,k}(s,u) & = \E[(X_{h}(s) - \mu(s))(X_{k}(u) - \mu(u))] \\
		& = \E[(X_{|h-k|}(s) - \mu(s))(X_0(u)-\mu(u))], \\
		& \triangleq \gamma_{h-k}(s,u).
		\end{aligned}
		\end{equation*}
	\end{enumerate}
\end{definition}

We also need a notion of the absolute summability of the autocovariance function for $L_2(S)$-valued stochastic processes.

\begin{definition}
	The $L_2(S)$-valued covariance stationary process $(X_t)_{t\in\Z}$ has the absolutely summable autocovariance function $\gamma_h$ if
	\begin{equation*}
		\sum_{h\in \Z}\|\gamma_h\|_1 <\infty,
	\end{equation*}
	where $\|\gamma_h\|_1 = \int_S|\gamma(s,s)|\dx s$ denotes the $L_1$ norm on the diagonal of $S\times S$.
\end{definition}

The following assumption restricts the dependence structure of the process.
\begin{assumption}\label{as:data_ts}
	$\{u\mapsto Y_t\Psi(u,W_t):\; t\in\mathbf{Z}\}$ and $\{(s,u)\mapsto Z_t(s)\Psi(u,W_t):\; t\in\mathbf{Z}\}$ are covariance stationary $L_2$-valued stochastic processes with absolutely summable autocovariance functions.
\end{assumption}
Assumption (i) is a relatively mild condition and is satisfied, in particular, when $(Y_t,Z_t,W_t)_{t\in\mathbf{Z}}$ is strictly stationary. The absolute summability of autocovariances is also a relatively mild condition that is typically assumed in the time series analysis. It is worth stressing that the stationarity is imposed on \textit{entire} trajectories of the processes over $t\in\Z$. At the same time, on a fixed day $t\in\Z$, the intraday observations $Z_t(s)$ for $s\in S$ can be non-stationary.

Since the mixed-frequency IV regression model is ill-posed, we also need to quantify the degree of ill-posedness of the operator $K$ and the regularity of the slope parameter $\beta$. The following conditions serve this purpose.
\begin{assumption}\label{as:source}
	The slope parameter $\beta$ belongs to the class
	\begin{equation*}
	\mathcal{F}(\gamma,R) = \left\{b \in L_2(S):\; b = (K^*K)^{\gamma}\psi,\;\|\psi\|^2\leq R\right\}
	\end{equation*}
	for some $\gamma\in(0,1]$ and $R>0$.
\end{assumption}
To appreciate this condition, note that if $\beta = (K^*K)^{\gamma}\psi$, then $\psi = (K^*K)^{-\gamma}\beta$. Let $(\sigma_j,\beta_j,\psi_j)_{j=1}^\infty$ be the singular values decomposition of the compact linear operator $K$, see \cite{carrasco2007linear}. Then $\beta = \sum_{j=1}^\infty\langle\beta,\beta_j\rangle\beta_j$ and by the Parseval's identity
\begin{equation*}
	\|\psi\|^2 = \sum_{j=1}^\infty\frac{|\langle\beta,\beta_j\rangle|^2}{\sigma_j^{4\gamma}}.
\end{equation*}
Therefore, $\beta = (K^*K)^\gamma \psi$ and $\|\psi\|^2\leq R$ in Assumption~\ref{as:source} restrict the regularity of the slope parameter $\beta$ as measured by how fast the Fourier coefficients $(\langle\beta,\beta_j\rangle)_{j=1}^\infty$ decrease to zero relatively to the smoothing properties of the operator $K$ as measured by how fast the singular values $(\sigma_j)_{j=1}^\infty$ decrease to zero and the regularity parameter $\gamma>0$.

The following result provides statistical guarantees on the estimation accuracy for the Tikhonov-regularized estimator in the mean-integrated squared error.
\begin{theorem}\label{thm:main_tikhonov}
	Suppose that Assumptions~\ref{as:contiv_td}, \ref{as:data_ts}, and \ref{as:source} are satisfied. Then
	\begin{equation*}
		\E\left\|\hat \beta - \beta\right\|^2 \leq C\left(\frac{1}{\alpha T} + \frac{\alpha^{2\gamma} + \alpha^{2\gamma\wedge 1}}{\alpha T} + \alpha^{2\gamma}\right),
	\end{equation*}
	where the constant $C$ can be found in Eq.~\ref{eq:constant}.
\end{theorem}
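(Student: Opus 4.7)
The plan is a standard bias--variance decomposition tailored to Tikhonov regularization with a noisy operator, combined with the source condition in Assumption~\ref{as:source}. Let $\beta_\alpha = (\alpha I + K^*K)^{-1}K^*r$ be the population-level regularized solution and split $\hat\beta - \beta = (\hat\beta - \beta_\alpha) + (\beta_\alpha - \beta)$. Using $r = K\beta$ and $\hat r = \hat K\beta + \hat\varepsilon$ with $\hat\varepsilon(u) = T^{-1}\sum_t U_t\Psi(u,W_t)$ (both follow from $\E[U_t\mid W_t]=0$), the identity $(\alpha I + A)^{-1}A = I - \alpha(\alpha I + A)^{-1}$ and the resolvent identity $(\alpha I + \hat A)^{-1} - (\alpha I + A)^{-1} = (\alpha I + \hat A)^{-1}(A - \hat A)(\alpha I + A)^{-1}$ together give
$$\hat\beta - \beta_\alpha = (\alpha I + \hat K^*\hat K)^{-1}[\hat K^*\hat K - K^*K](\beta - \beta_\alpha) + (\alpha I + \hat K^*\hat K)^{-1}\hat K^*\hat\varepsilon,$$
which I denote $T_1 + T_2$.

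For the deterministic bias, $\beta_\alpha - \beta = -\alpha(\alpha I + K^*K)^{-1}(K^*K)^\gamma\psi$, and the elementary inequality $\sup_{\lambda\ge 0}\alpha\lambda^\gamma/(\alpha+\lambda) \le C_\gamma\alpha^\gamma$ for $\gamma\in(0,1]$ yields $\|\beta_\alpha - \beta\|^2 \le C\alpha^{2\gamma}$, producing the $\alpha^{2\gamma}$ term. For $T_2$, the SVD-based bound $\|(\alpha I + \hat K^*\hat K)^{-1}\hat K^*\|\le 1/(2\sqrt\alpha)$ combined with $\E\|\hat\varepsilon\|^2 \le C/T$ — which follows from Assumption~\ref{as:data_ts}, Fubini, and summing the absolutely summable autocovariance kernel on the diagonal — gives $\E\|T_2\|^2 \le C/(\alpha T)$, the first term of the theorem.

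The term $T_1$ is the delicate part. Split $\hat K^*\hat K - K^*K = \hat K^*(\hat K - K) + (\hat K^* - K^*)K$, giving $T_1 = T_{1a} + T_{1b}$. For $T_{1a}$, use $\|(\alpha I + \hat K^*\hat K)^{-1}\hat K^*\|\le 1/(2\sqrt\alpha)$, the deterministic bias bound $\|\beta - \beta_\alpha\| \le C\alpha^\gamma$, and the key estimate $\E\|\hat K - K\|^2 \le C/T$ in Hilbert--Schmidt norm (again from Assumption~\ref{as:data_ts}) to get $\E\|T_{1a}\|^2 \le C\alpha^{2\gamma}/(\alpha T)$. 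For $T_{1b}$, rewrite $K(\beta - \beta_\alpha) = \alpha K(\alpha I + K^*K)^{-1}(K^*K)^\gamma\psi$ and use the SVD of $K$ to show, by spectral maximization of the function $\lambda\mapsto \alpha^2\lambda^{2\gamma+1}/(\alpha+\lambda)^2$, that $\|K(\beta - \beta_\alpha)\|^2 \le C\alpha^{1+(2\gamma\wedge 1)}$; combined with $\|(\alpha I + \hat K^*\hat K)^{-1}\| \le 1/\alpha$ and $\E\|\hat K - K\|^2 \le C/T$ this yields $\E\|T_{1b}\|^2 \le C\alpha^{2\gamma\wedge 1}/(\alpha T)$. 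Summing the four contributions reproduces the stated bound.

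The main obstacle is the $2\gamma\wedge 1$ exponent in the $T_{1b}$ estimate: it emerges only after carefully tracking two regimes for the maximum of $\alpha^2\lambda^{2\gamma+1}/(\alpha+\lambda)^2$; for $\gamma\le 1/2$ the extremum is interior at $\lambda \asymp \alpha$ and produces the rate $\alpha^{2\gamma+1}$, whereas for $\gamma > 1/2$ the function is increasing and must be capped by the top of the spectrum of $K$, yielding the slower rate $\alpha^2$, the two regimes being glued together by the $\wedge$. A secondary but essential technical input is the variance bound $\E\|\hat K - K\|^2 \le C/T$ in Hilbert--Schmidt norm, which I would derive by writing $\|\hat K - K\|_{HS}^2 = \int\!\!\int |\hat k(s,u) - k(s,u)|^2\,\dx s\,\dx u$, exchanging expectation and integration by Fubini, and invoking the absolute summability of the autocovariances of $(s,u)\mapsto Z_t(s)\Psi(u,W_t)$.
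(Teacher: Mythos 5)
Your proof is correct and follows essentially the same route as the paper: your decomposition $T_{1a}+T_{1b}+T_2+(\beta_\alpha-\beta)$ coincides term-by-term with the paper's $R_2+R_3+R_1+R_4$ (since $\beta-\beta_\alpha=\alpha(\alpha I+K^*K)^{-1}\beta$), and the spectral bounds, the $2\gamma\wedge 1$ case analysis, and the $O(1/T)$ variance bounds via absolutely summable autocovariances all match the paper's argument. The only cosmetic difference is that you bound $\E\|\hat\varepsilon\|^2$ by treating $U_t\Psi(\cdot,W_t)$ directly, whereas the paper writes $\hat r-\hat K\beta=(\hat r-r)-(\hat K-K)\beta$ so as to invoke Assumption~\ref{as:data_ts} exactly as stated; your version needs that one extra line since the assumption is phrased in terms of $Y_t\Psi$ and $Z_t\Psi$ rather than $U_t\Psi$.
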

Consequently, if the regularization parameter $\alpha$ tends to zero, we obtain
\begin{equation*}
	\E\|\hat \beta - \beta\|^2 = O\left(\frac{1}{\alpha T} + \alpha^{2\gamma}\right).
\end{equation*}
The two terms are balanced for $\alpha\sim T^{-\frac{1}{2\gamma + 1}}$, in which case the convergence rate of the integrated MSE is $O\left(T^{-\frac{2\gamma}{2\gamma + 1}}\right)$. The uniform inference for the Tikhonov-regularized estimator is also possible, cf., \cite{babii2020a}. Lastly, one could also consider regularization with Sobolev norm penalty and/or more general spectral regularization schemes, see \cite{carrasco2007linear}, \cite{carrasco2013asymptotic}, \cite{babii2017completeness}, and \cite{babii2020b}.

\subsection{Infill asymptotics}
So far we have assumed that the trajectory of the stochastic process $\{Z_t(s):\; t=1,\dots,T,s\in S\}$ is completely observed. In this section, we relax this requirement and investigate the case when we only observe $\{Z_t(s_j):\; t=1,\dots,T,\;j=1,\dots,m\}$, i.e., realizations of the process at discrete time points $s_j\in S,j=1,\dots,m$. For simplicity of presentation, suppose that $S=[0,1]$ and that $0=s_0\leq s_1< s_2 <\dots< s_m=1$. 

Then the operator
\begin{equation*}
\begin{aligned}
	(\hat K\phi)(u) & = \int_0^1\phi(s)\hat k(s,u)\dx s \\
	\hat k(s,u) & = \frac{1}{T}\sum_{t=1}^TZ_t(s)\Psi(u,W_t)
\end{aligned}
\end{equation*}
is not accessible in practice the continuous-time stochastic process $Z_t$ is only partially observed. Instead, we observe its discrete-time approximation for every $\phi\in C[0,1]$
\begin{equation*}
	(\hat K_m\phi)(u) = \sum_{j=1}^m\phi(s_j)\hat k(s_j,u)\delta_j
\end{equation*}
with $\delta_j = s_j - s_{j-1}$. Let $\Delta_m \triangleq \max_{1\leq j\leq m}\delta_j$ and let $\hat\beta_m$ be the solution to
\begin{equation*}
	(\alpha I + \hat K^*\hat K_m)\hat\beta_m = \hat K^*\hat r.
\end{equation*}

For the in-fill asymptotics, we need additionally the following assumption.
\begin{assumption}\label{as:discretization}
	(i) The process $Z$ has trajectories in the H\"{o}lder class $C_L^\kappa[0,1]$ for some $\kappa\in(0,1)$ and $L\in(0,\infty)$; (ii) $\sup_{w}\|\Psi(.,w)\|^2\leq \bar\Psi<\infty$; (iii) $\Delta_m = O(\alpha^{1/\kappa}/T^{1/2\kappa})$ as $\alpha\to 0$ and $T\to\infty$.
\end{assumption}
Assumption~\ref{as:discretization} (i) is satisfied, e.g., for the Brownian motion on $[0,1]$ with $\kappa<1/2$. (ii) is satisfied, e.g., for uniformly bounded instrument functions on compact intervals. (iii) imposes restrictions on the in-fill asymptotics. In the special case of the uniform spacing, it reduces to the condition $m^{-\kappa} = O(\alpha/T^2)$. In other words, the number of regressors should increase sufficiently fast. It is worth stressing that the number of regressors $m$ can be much larger than the sample size $T$ and can increase even faster than exponentially.

The following result shows that the integrated MSE can converge at the same rate as if we observed the process, cf., Theorem~\ref{thm:main_tikhonov}.
\begin{theorem}\label{thm:main_tikhonov_discretization}
	Suppose that Assumptions~\ref{as:contiv_td}, \ref{as:data_ts}, \ref{as:source}, and \ref{as:discretization} are satisfied. Then
	\begin{equation*}
	\E\left\|\hat \beta_m - \beta\right\|^2 \leq O\left(\frac{1}{\alpha T} + \alpha^{2\gamma}\right)
	\end{equation*}
	for some constants $C<\infty$.
\end{theorem}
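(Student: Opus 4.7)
The plan is to compare $\hat\beta_m$ to the oracle Tikhonov estimator $\hat\beta = (\alpha I + \hat K^*\hat K)^{-1}\hat K^*\hat r$ from Theorem~\ref{thm:main_tikhonov} and to show that the discretization error is dominated by the rate already obtained there. By the triangle inequality, $\|\hat\beta_m - \beta\| \leq \|\hat\beta_m - \hat\beta\| + \|\hat\beta - \beta\|$, and Theorem~\ref{thm:main_tikhonov} controls the expected squared norm of the second piece. Subtracting the two defining equations produces the identity
\begin{equation*}
(\alpha I + \hat K^*\hat K_m)(\hat\beta_m - \hat\beta) = \hat K^*(\hat K - \hat K_m)\hat\beta,
\end{equation*}
so the analysis reduces to bounding $\|(\alpha I + \hat K^*\hat K_m)^{-1}\hat K^*(\hat K - \hat K_m)\hat\beta\|$ in expectation.

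First I would bound the quadrature residual $\|(\hat K - \hat K_m)\hat\beta\|$. The key regularity input is that $\hat\beta$ lies in the range of $\hat K^*$, via the identity $\hat\beta = \hat K^*(\alpha I + \hat K\hat K^*)^{-1}\hat r$. Since $(\hat K^*\psi)(s) = \int \psi(u)\hat k(s,u)\dx u$ and $\hat k(s,u) = T^{-1}\sum_t Z_t(s)\Psi(u,W_t)$ inherits $\kappa$-H\"older continuity in $s$ from $Z$ by \autoref{as:discretization}(i), the random function $\hat\beta$ is itself $\kappa$-H\"older continuous, with H\"older constant whose second moment is controlled by \autoref{as:discretization}(i)--(ii) together with \autoref{as:data_ts}. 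The standard quadrature error bound for H\"older integrands applied to $s \mapsto \hat\beta(s)\hat k(s,u)$ then gives a pointwise $O(\Delta_m^\kappa)$ error, and integration over $u$ using the envelope $\sup_w\|\Psi(\cdot,w)\|^2 \leq \bar\Psi$ of \autoref{as:discretization}(ii) yields $\E\|(\hat K - \hat K_m)\hat\beta\|^2 = O(\Delta_m^{2\kappa})$.

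Next I would bound the resolvent factor $\|(\alpha I + \hat K^*\hat K_m)^{-1}\hat K^*\|$ by perturbation around the self-adjoint case. Writing $\alpha I + \hat K^*\hat K_m = (\alpha I + \hat K^*\hat K) + \hat K^*(\hat K_m - \hat K)$, spectral calculus on the self-adjoint part yields $\|(\alpha I + \hat K^*\hat K)^{-1}\hat K^*\| \leq 1/(2\sqrt\alpha)$, while \autoref{as:discretization}(iii) forces the perturbation $\hat K^*(\hat K_m - \hat K)$ to have operator norm $o(\alpha)$ on the natural subspace (the range of $\hat K^*$, where $\hat K_m$ acts as a quadrature of a H\"older integrand). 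A Neumann series expansion then delivers $\|(\alpha I + \hat K^*\hat K_m)^{-1}\hat K^*\| \leq C/\sqrt\alpha$ for $T$ large. Combining the two estimates gives $\E\|\hat\beta_m - \hat\beta\|^2 = O(\Delta_m^{2\kappa}/\alpha) = O(\alpha/T)$ under \autoref{as:discretization}(iii), which is dominated by the $1/(\alpha T)$ variance term in Theorem~\ref{thm:main_tikhonov}; adding it to that bound yields the claim.

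The main obstacle will be the non-self-adjointness of $\hat K^*\hat K_m$: unlike $\hat K^*\hat K$, it is not amenable to functional calculus, so the resolvent bound must be obtained by a careful Neumann/perturbation argument restricted to the subspace where $\hat K_m$ is meaningfully defined. A related obstacle is that both the H\"older constant of $\hat\beta$ and the perturbation norm $\|\hat K - \hat K_m\|$ depend on the random sample, so the expectation must be taken by isolating a high-probability event on which the Neumann series converges and the quadrature bound applies deterministically, with the complement controlled by Chebyshev and the absolute summability of autocovariances from Assumption~\ref{as:data_ts}.
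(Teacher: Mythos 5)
Your strategy is sound and lands on the same two pillars as the paper's proof -- reduce to $\E\|\hat\beta_m-\hat\beta\|^2$ via Theorem~\ref{thm:main_tikhonov}, and control the quadrature error through the H\"{o}lder regularity of the trajectories together with Assumption~\ref{as:discretization}(iii) -- but your algebraic decomposition is the mirror image of the paper's. The paper writes $\hat\beta_m=\hat K^*(\alpha I+\hat K_m\hat K^*)^{-1}\hat r$ and applies the second resolvent identity to $(\alpha I+\hat K_m\hat K^*)^{-1}-(\alpha I+\hat K\hat K^*)^{-1}$, which yields $\hat\beta_m-\hat\beta=\hat K^*(\alpha I+\hat K\hat K^*)^{-1}(\hat K-\hat K_m)\hat K^*(\alpha I+\hat K_m\hat K^*)^{-1}\hat r$; the quadrature error then enters only through the single composite operator $(\hat K-\hat K_m)\hat K^*$, which is globally defined on $L_2$ because $\hat K^*\psi$ is always H\"{o}lder, and no Neumann series is needed. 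Your version, $(\alpha I+\hat K^*\hat K_m)(\hat\beta_m-\hat\beta)=\hat K^*(\hat K-\hat K_m)\hat\beta$, is also correct and instead applies the quadrature error to the specific function $\hat\beta$, which forces you to (a) establish H\"{o}lder regularity of $\hat\beta$ (your route via $\hat\beta\in\mathrm{range}(\hat K^*)$ is the right one) and (b) invert the non-self-adjoint perturbed operator, which you rightly flag as the delicate step. To be fair, the paper silently uses $\|(\alpha I+\hat K_m\hat K^*)^{-1}\|_\infty\le 1/\alpha$, which is not a spectral-calculus fact for the non-self-adjoint $\hat K_m\hat K^*$ either, so your explicit perturbation argument addresses a point the paper glosses over. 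One bookkeeping caveat: the H\"{o}lder constant of $\hat\beta=\hat K^*(\alpha I+\hat K\hat K^*)^{-1}\hat r$ inherits a factor $\|\hat r\|/\alpha$ from the resolvent, so the honest version of your bound is $O(\Delta_m^{2\kappa}/\alpha^{3})$ rather than $O(\Delta_m^{2\kappa}/\alpha)$ -- the same exponent the paper obtains. Assumption~\ref{as:discretization}(iii) gives $\Delta_m^{2\kappa}=O(\alpha^2/T)$, so $\Delta_m^{2\kappa}/\alpha^3=O(1/(\alpha T))$ and the conclusion is unaffected, but your intermediate claim that the discretization term is $O(\alpha/T)$ overstates the rate.
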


\section{Monte Carlo experiments}\label{section:mc}
In this section, we discuss the numerical implementation of our high-dimensional mixed-frequency IV estimator and study its behavior in finite samples with Monte Carlo experiments.

We use the logistic CDF, $\Psi(u,W) = \frac{1}{1+\exp(-u^\top W)}$, as an instrument function.\footnote{This function fits our assumptions since it is uniformly bounded and real-valued, unlike some other choices, cf., \cite{bierens1982consistent} and \cite{stinchcombe1998consistent}. At the same time, we find in Monte Carlo experiments that it works significantly better than, e.g., $\Psi(s,w)=\one\{w\leq s \}$.} We rewrite Eq.~\ref{eq:tikhonov} as $\alpha \hat{\beta} + \hat{K}^*\hat{K}\hat\beta = \hat{K}^*\hat r$ and discretize it with the Riemann sum on a grid of uniformly spaced points $j/m,j=1,\dots,m$. The discretized equation is
\begin{equation*}
	\alpha\boldsymbol{\hat\beta} + \mathbf{Z}^\top\boldsymbol{\Psi}^\top\boldsymbol{\Psi} \mathbf{Z}\boldsymbol{\hat\beta}/(Tm)^2 = \mathbf{Z}^\top\boldsymbol{\Psi}^\top\boldsymbol{\Psi} \mathbf{y}/T^2m,
\end{equation*}
where $\boldsymbol{\hat\beta}= (\hat{\beta}(j/m))_{1\leq j\leq m}, \mathbf{Z} = (Z_t(j/m))_{\substack{1\leq t\leq T \\ 1\leq j\leq m}},\boldsymbol{\Psi} = (\Psi(j/m,W_t))_{\substack{1\leq j\leq m \\ 1\leq t\leq T}},\mathbf{y} = (Y_t)_{1\leq t\leq T}$ and $\mathbf{I}_T$ is a $T\times T$ identity matrix. Then we compute the estimator as
\begin{equation*}
	\boldsymbol{\hat\beta} = \left(\alpha\mathbf{I}_T + \mathbf{Z}^\top\boldsymbol{\Psi}^\top\boldsymbol{\Psi} \mathbf{Z}/(Tm)^2\right)^{-1}\mathbf{Z}^\top\boldsymbol{\Psi}^\top\boldsymbol{\Psi} \mathbf{y}/T^2m.
\end{equation*}

There are $5,000$ replications in each Monte Carlo experiment. We generate samples of $(Y_t,Z_t,W_t)_{t=1}^T$ of size $T\in\{100,500,1000\}$ as follows
\begin{equation*}
\begin{aligned}
Y_t & = \int_0^1\beta(s)Z_t(s)\dx s + U_t,\\
Z_t(s) & = k(s,W_t) + \sigma B_t(s),\qquad W_t = 0.5 + 0.7W_{t-1} + \varepsilon_t\\
k(s,w) &= \sqrt{s^2+w^2},\qquad \varepsilon_t\sim_{i.i.d.} N(0,1) \\
U_t & = 0.5\int_0^1 B_t(s)\dx s + 0.5V_t,\qquad  V_t\sim_{i.i.d.} N(0,1)
\end{aligned}
\end{equation*}
where $\{B_t(s): s\in[0,1],t=1,\dots,T \}$ are independent Brownian motion, generated independently of all other variables and initiated at i.i.d. random draws from $U(-1/2,1/2)$. The parameter $\sigma\in\{0.5,1\}$ represents the noise level. We consider two slope parameters $\beta(s)=-10\exp(s)$ and $\beta(s)=10s$ with $s\in[0,1]$. All continuous-time quantities are discretized at $200$ equidistant points.

The integrated bias, variance, and MSE are approximated by the Riemann sum on a grid of $100$ equidistant points in $[0,1]$. Table~\ref{tab:2_1} and Table~\ref{tab:2_2} present the results of our Monte Carlo experiments for two different population slope parameters. The mixed-frequency IV estimator behaves according to our asymptotic results. We can see the bias/variance trade-off -- as the regularization parameter $\alpha$ tends to zero, the bias decreases while the variance increases. The optimal choice of the regularization parameter should balance the two. The estimator performs better when the sample size increases and the noise level decreases. We can also see that the linear slope parameter is estimated more accurately. Figure~\ref{fig:2_1} and Figure~\ref{fig:2_2} summarize graphically the outcome of Monte Carlo experiments for $\alpha = 10^{-6}$. The shaded gray area represents the pointwise $95\%$ confidence interval across $5,000$ replications. Overall, the mixed-frequency IV estimator demonstrates excellent performance across different specifications.

It is worth stressing that since the stochastic $Z$ is observed at $m=200$ time points, the number of endogenous regressors exceeds the sample size when $T=100$. In this case, the conventional IV estimator does not exist. At the same time, the naive generalization of the ridge regression and the LASSO are also not appropriate in our setting. The ridge regression would typically require $m/T\to 0$, cf., \cite{carrasco2007linear}. The LASSO would require the approximate sparsity, somewhat stronger weak dependence conditions, and $m^{1/\kappa}/T^{1-1/\kappa}\to 0$, where $\kappa$ measures tails and weak dependence, cf., \cite{babii2019estimation}.

\begin{table}
	\centering
	\begin{threeparttable}
		\caption{Experiments for $\beta(s)=-10\exp(s)$.}
		\begin{tabular}{l|l|l|r|r|r}
			\toprule
			$T$ & $\sigma$ & $\alpha$ & $\text{i-Bias}^2$ & $\text{i-Var}$   & $\text{i-MSE}$ \\
			\midrule
     		100 & 0.5 & $10^{-5}$ &3.7458  &  0.6178  &  4.3636\\
			& 	  & $10^{-6}$     &0.2189  &  0.8535  &  1.0723\\
			& 	  & $10^{-7}$     &0.0847  &  1.5591  &  1.6437\\
			& 1.0 & $10^{-5}$     &4.4351  &  1.2387  &  5.6738\\
			& 	  & $10^{-6}$     &0.6539  &  2.0718  &  2.7256\\
			& 	  & $10^{-7}$ &0.2793  &  3.1782  &  3.4575\\\hline
			500 & 0.5 & $10^{-5}$ &3.1686  &  0.1257  &  3.2944\\
			& 	  & $10^{-6}$ &0.1276  &  0.1531  &  0.2807\\
			& 	  & $10^{-7}$ &0.0427  &  0.2180  &  0.2607\\
			& 1.0 & $10^{-5}$ &3.3092  &  0.2768  &  3.5860\\
			& 	  & $10^{-6}$ &0.1724  &  0.3889  &  0.5613\\
			& 	  & $10^{-7}$ &0.0524  &  0.5324  &  0.5848\\\hline
			1000& 0.5 & $10^{-5}$ &3.0921  &  0.0633  &  3.1554\\
			& 	  & $10^{-6}$ &0.1194  &  0.0750  &  0.1944\\
			& 	  & $10^{-7}$ &0.0391  &  0.0998  &  0.1389\\
			& 1.0 & $10^{-5}$ &3.1539  &  0.1407  &  3.2946\\
			& 	  & $10^{-6}$ &0.1347  &  0.1921  &  0.3267\\
			& 	  & $10^{-7}$ &0.0422  &  0.2398  &  0.2820\\
			\bottomrule
		\end{tabular}
		\begin{tablenotes}
			\small
			\item Note: results for different sample sizes $T$, noise levels $\sigma$, and regularization parameters $\alpha$.
		\end{tablenotes}
		\label{tab:2_1}
	\end{threeparttable}
\end{table}

\begin{table}
	\centering
	\begin{threeparttable}
		\caption{Experiments for $\beta(s)=10s$.}
		\begin{tabular}{l|l|l|r|r|r}
			\toprule
			$T$ & $\sigma$ & $\alpha$ & $\text{i-Bias}^2$ & $\text{i-Var}$   & $\text{i-MSE}$ \\
			\midrule
			100 & 0.5 & $10^{-5}$ &1.6466 &   0.3971 &   2.0437\\
			& 	  & $10^{-6}$     &0.2158 &   0.6547 &  0.8705\\
			& 	  & $10^{-7}$     &0.0633  &  1.2591 &   1.3225\\
			& 1.0 & $10^{-5}$     &1.8666 &   0.6451 &   2.5117\\
			& 	  & $10^{-6}$     &0.3660  &  1.1376  &  1.5036\\
			& 	  & $10^{-7}$ &0.1030  &  2.0589  &  2.1618\\ \hline
			500 & 0.5 & $10^{-5}$ &1.3965  &  0.0860  &  1.4825\\
			& 	  & $10^{-6}$ &0.1766  &  0.1188  &  0.2954\\
			& 	  & $10^{-7}$ &0.1160  &  0.1739  &  0.2900\\
			& 1.0 & $10^{-5}$ &1.4756  &  0.1609  &  1.6365\\
			& 	  & $10^{-6}$ &0.1934  &  0.2236  &  0.4170\\
			& 	  & $10^{-7}$ &0.0946  &  0.3300  &  0.4246\\ \hline
			1000& 0.5 & $10^{-5}$ &1.3717  &  0.0425  &  1.4142\\
			& 	  & $10^{-6}$ &0.1724  &  0.0591  &  0.2315\\
			& 	  & $10^{-7}$ &0.1249  &  0.0787  &  0.2035\\
			& 1.0 & $10^{-5}$ &1.4036  &  0.0818  &  1.4855\\
			& 	  & $10^{-6}$ &0.1832  &  0.1079  &  0.2912\\
			& 	  & $10^{-7}$ &0.1139  &  0.1444  &  0.2584\\
			\bottomrule
		\end{tabular}
		\begin{tablenotes}
			\small
			\item Note: results for different sample sizes $T$, noise levels $\sigma$, and regularization parameters $\alpha$.
		\end{tablenotes}
		\label{tab:2_2}
	\end{threeparttable}
\end{table}

\begin{figure}
	\begin{subfigure}[b]{0.32\textwidth}
		\includegraphics[width=\textwidth]{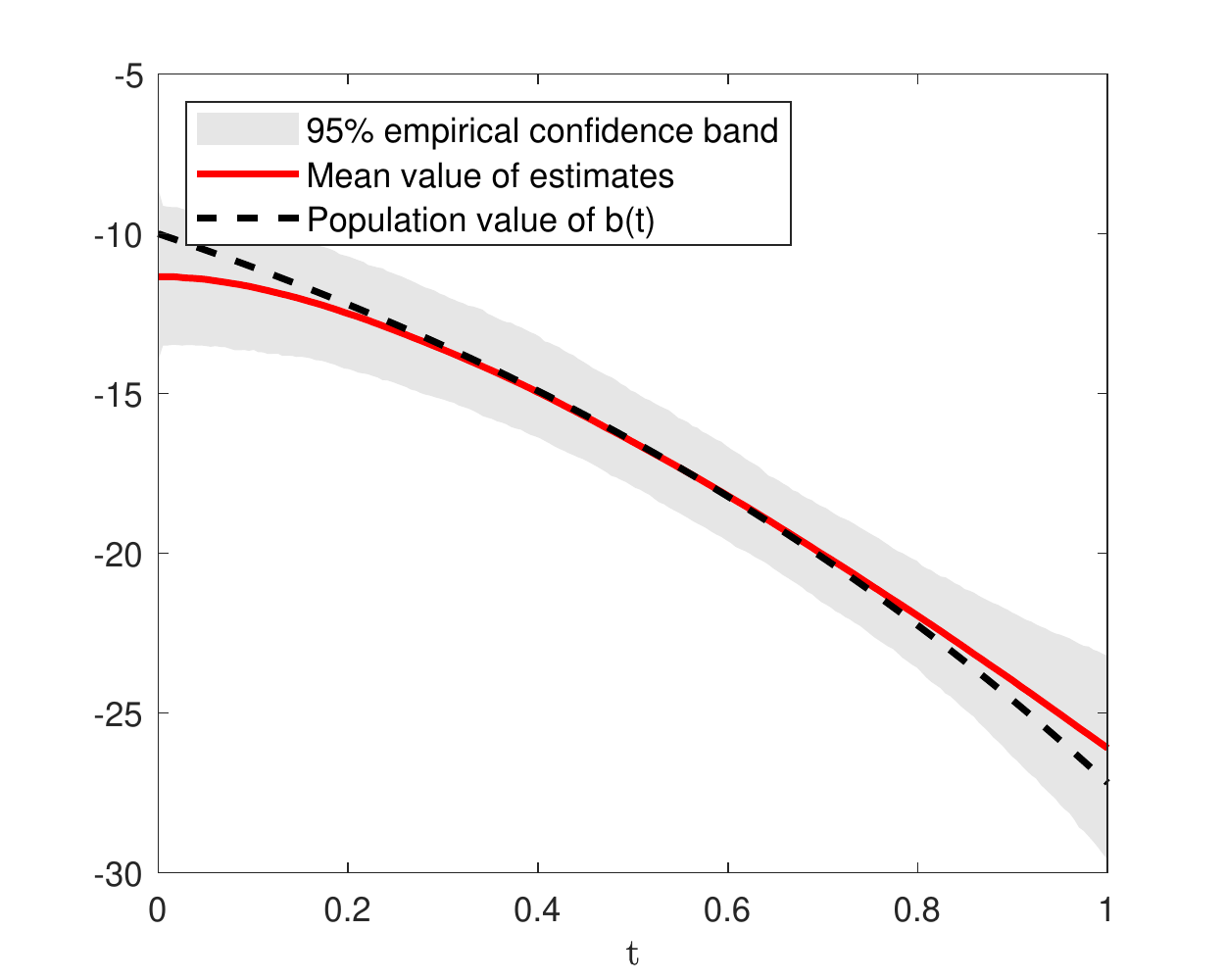}
		\caption{$\sigma=0.5,T=100$}
	\end{subfigure}
	\begin{subfigure}[b]{0.32\textwidth}
		\includegraphics[width=\textwidth]{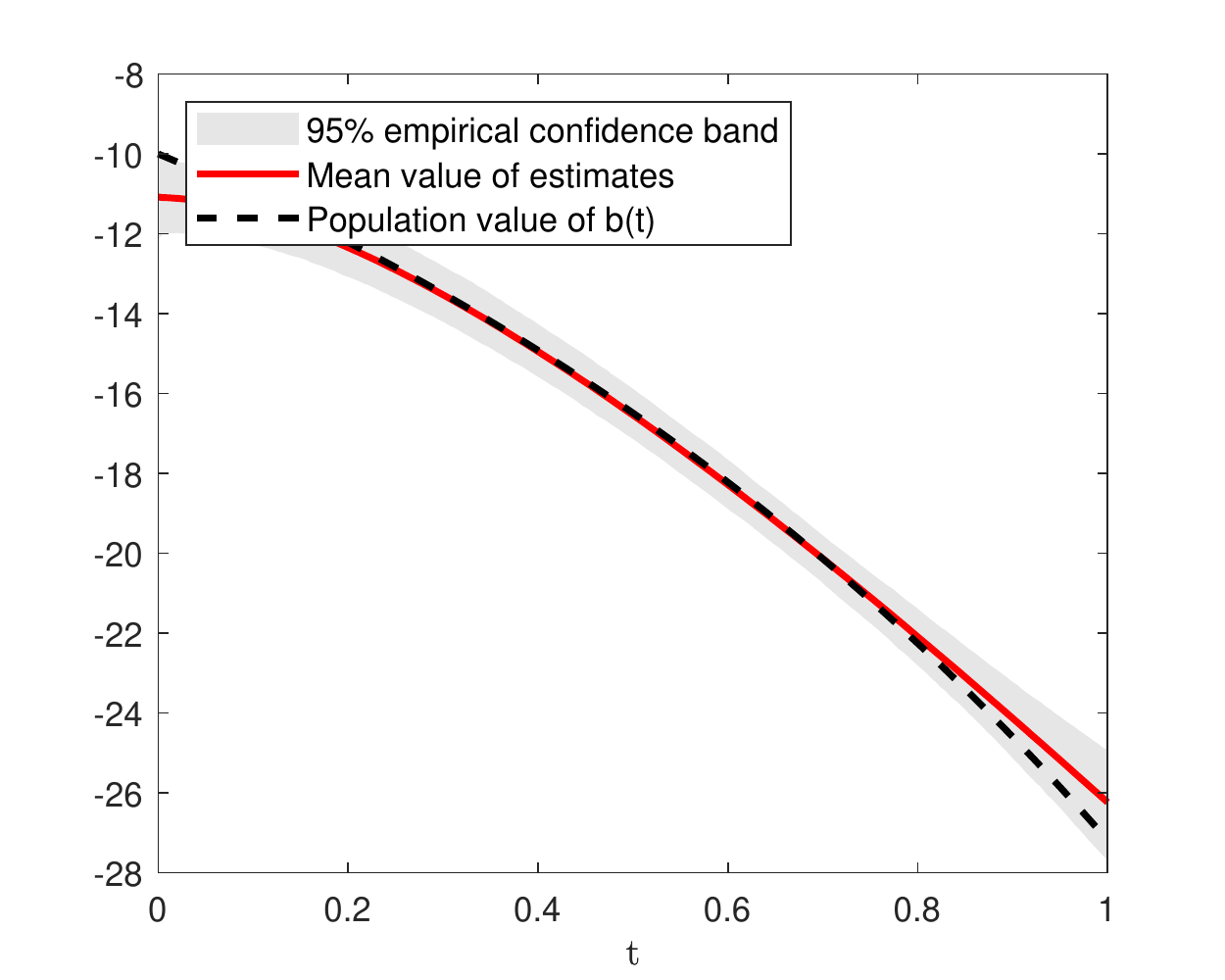}
		\caption{$\sigma=0.5,T=500$}
	\end{subfigure}
	\begin{subfigure}[b]{0.32\textwidth}
		\includegraphics[width=\textwidth]{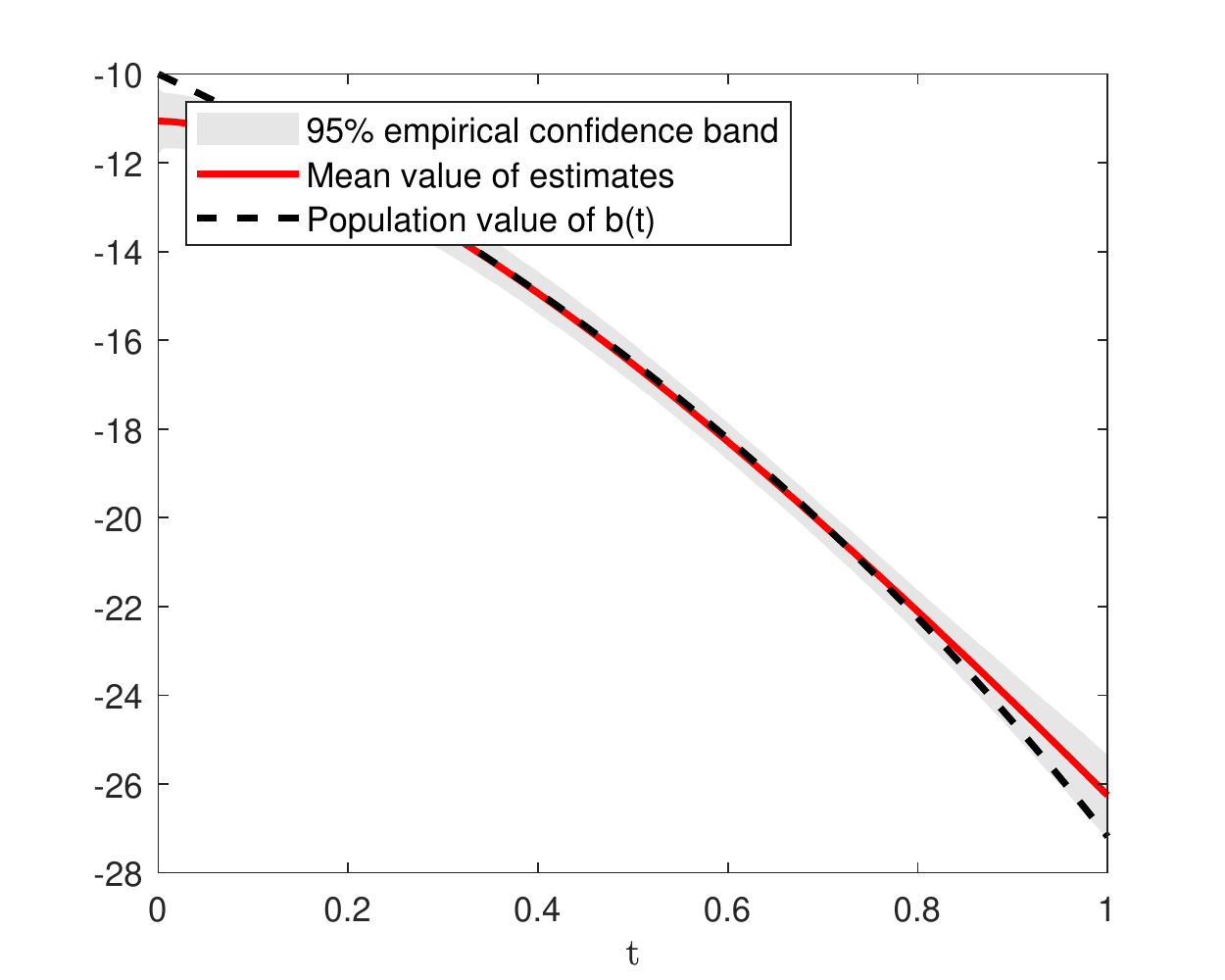}
		\caption{$\sigma=0.5,T=1000$}
	\end{subfigure}
	\begin{subfigure}[b]{0.32\textwidth}
		\includegraphics[width=\textwidth]{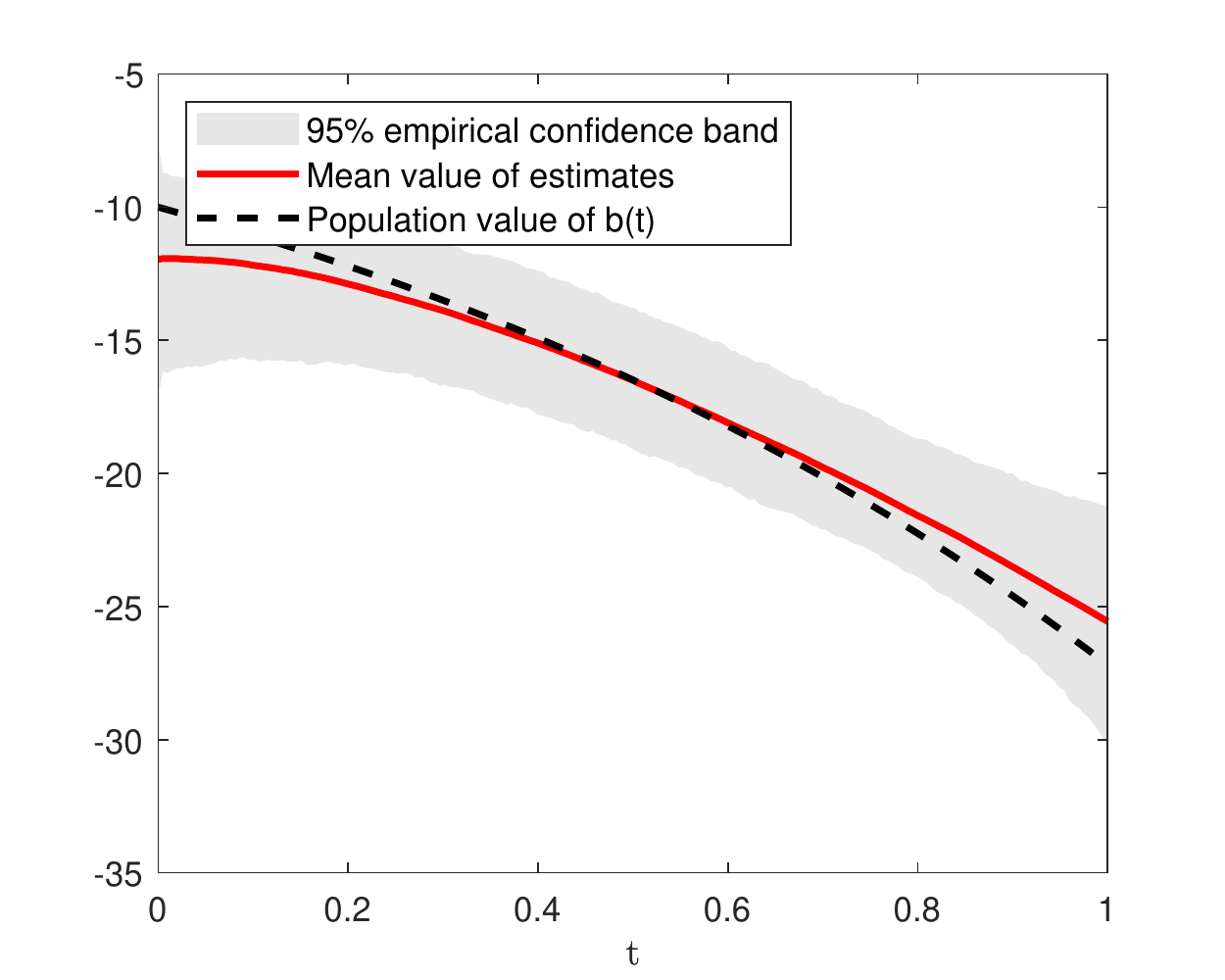}
		\caption{$\sigma=1,T=100$}
	\end{subfigure}
	\begin{subfigure}[b]{0.32\textwidth}
		\includegraphics[width=\textwidth]{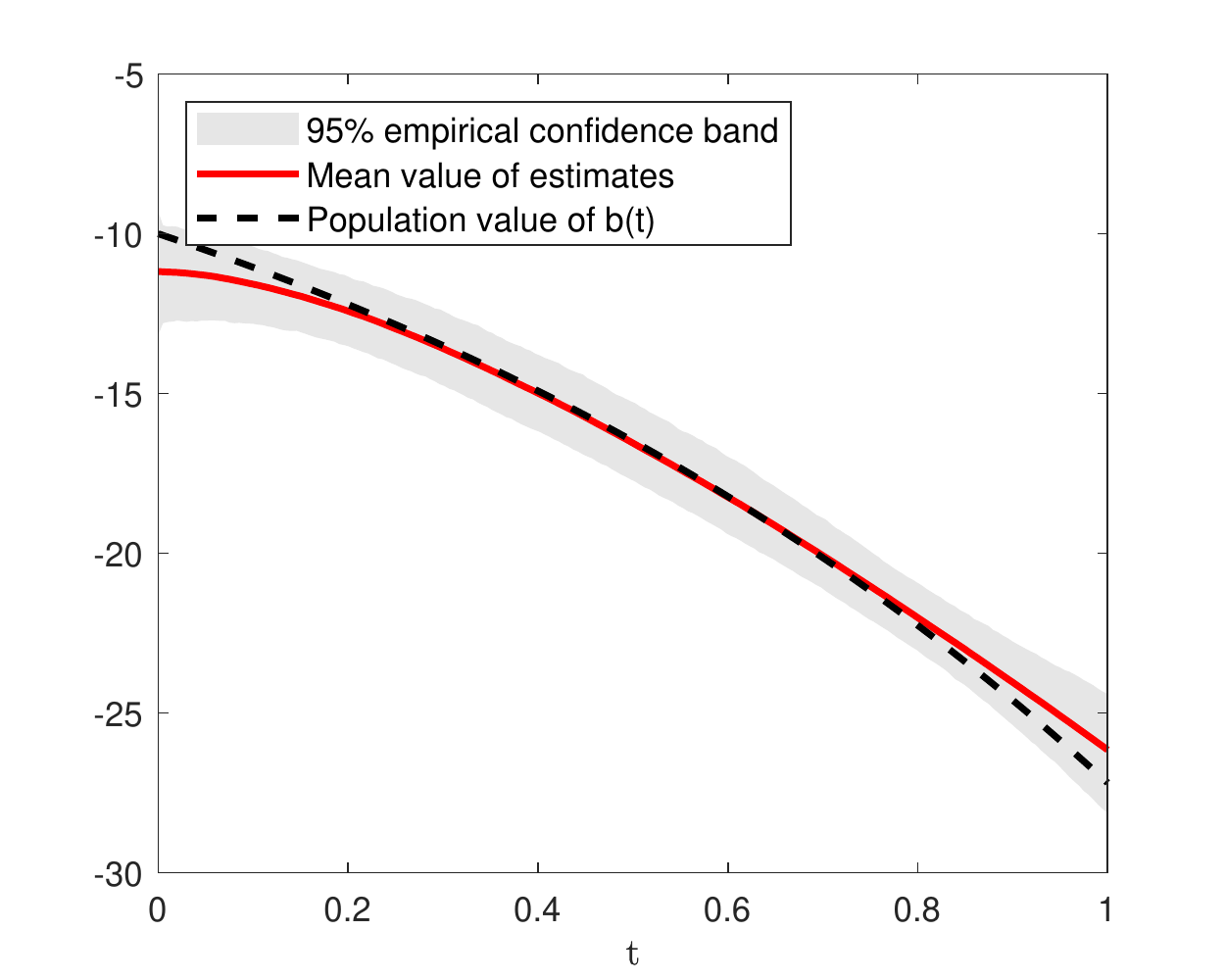}
		\caption{$\sigma=1,T=500$}
	\end{subfigure}
	\begin{subfigure}[b]{0.32\textwidth}
		\includegraphics[width=\textwidth]{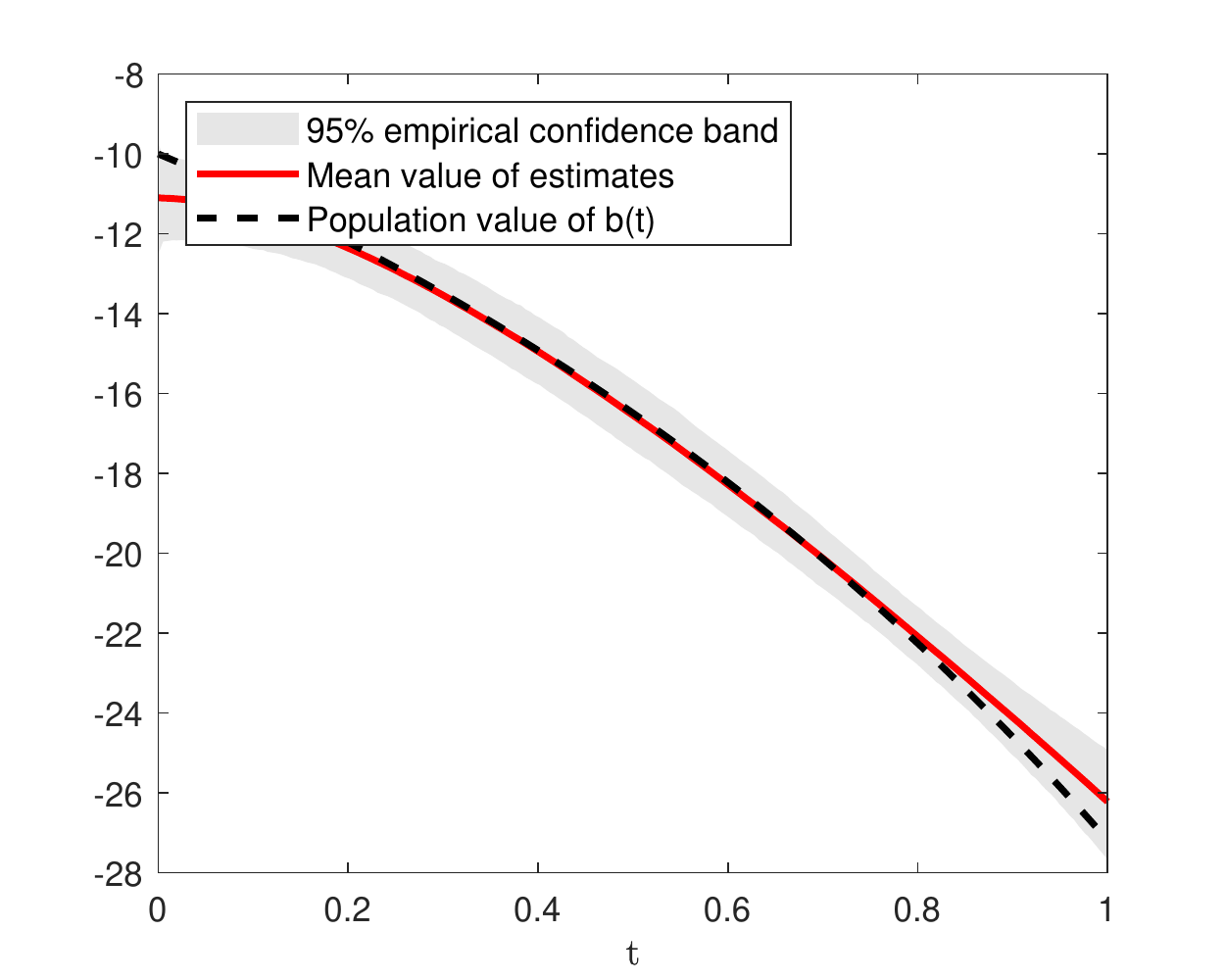}
		\caption{$\sigma=1,T=1000$}
	\end{subfigure}
	\caption{Summary of Monte Carlo experiments for $\beta(s)=-10\exp(s)$. Regularization parameter: $\alpha=10^{-6}$.}
	\label{fig:2_1}
\end{figure}

\begin{figure}
	\begin{subfigure}[b]{0.32\textwidth}
		\includegraphics[width=\textwidth]{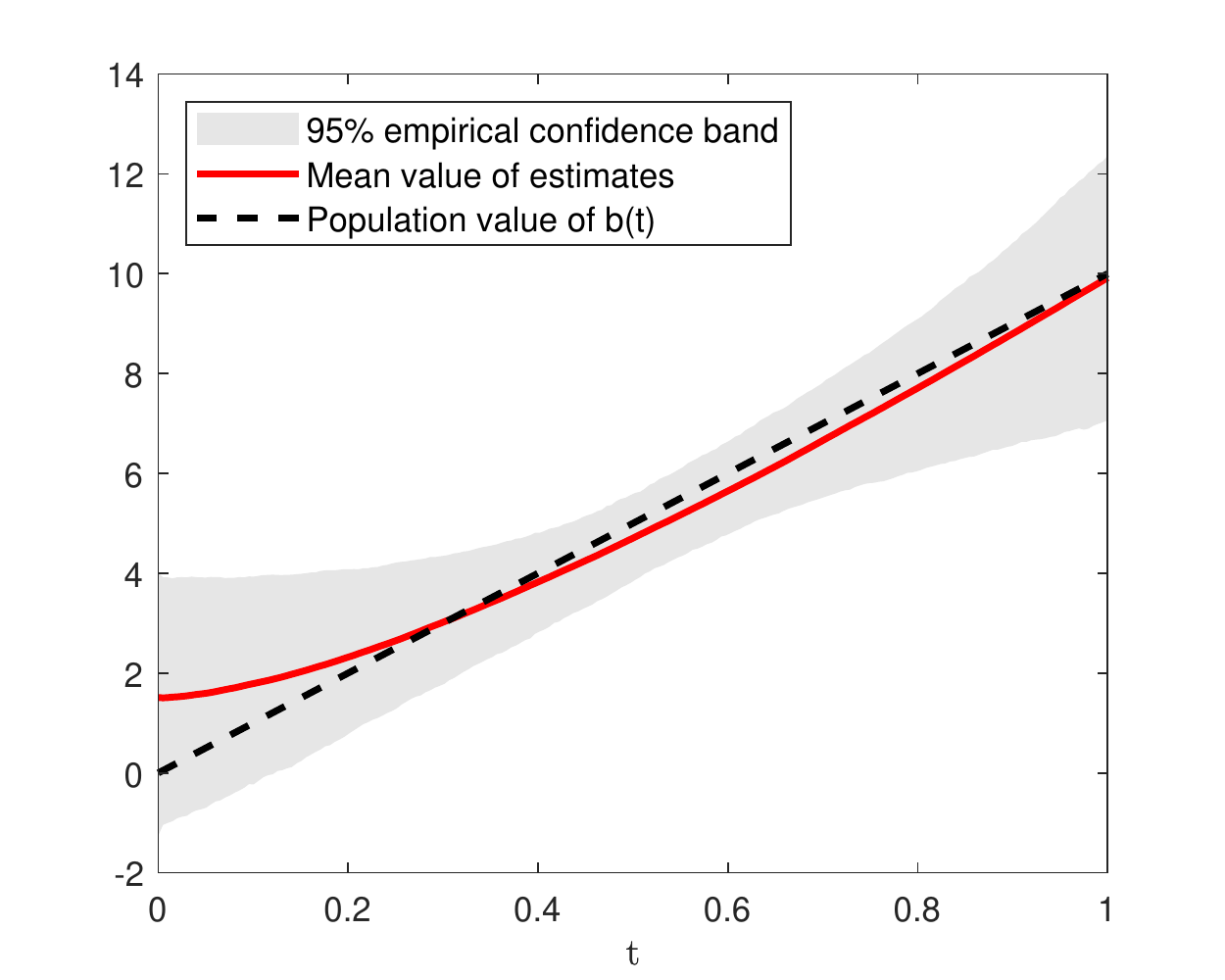}
		\caption{$\sigma=0.5,T=100$}
	\end{subfigure}
	\begin{subfigure}[b]{0.32\textwidth}
		\includegraphics[width=\textwidth]{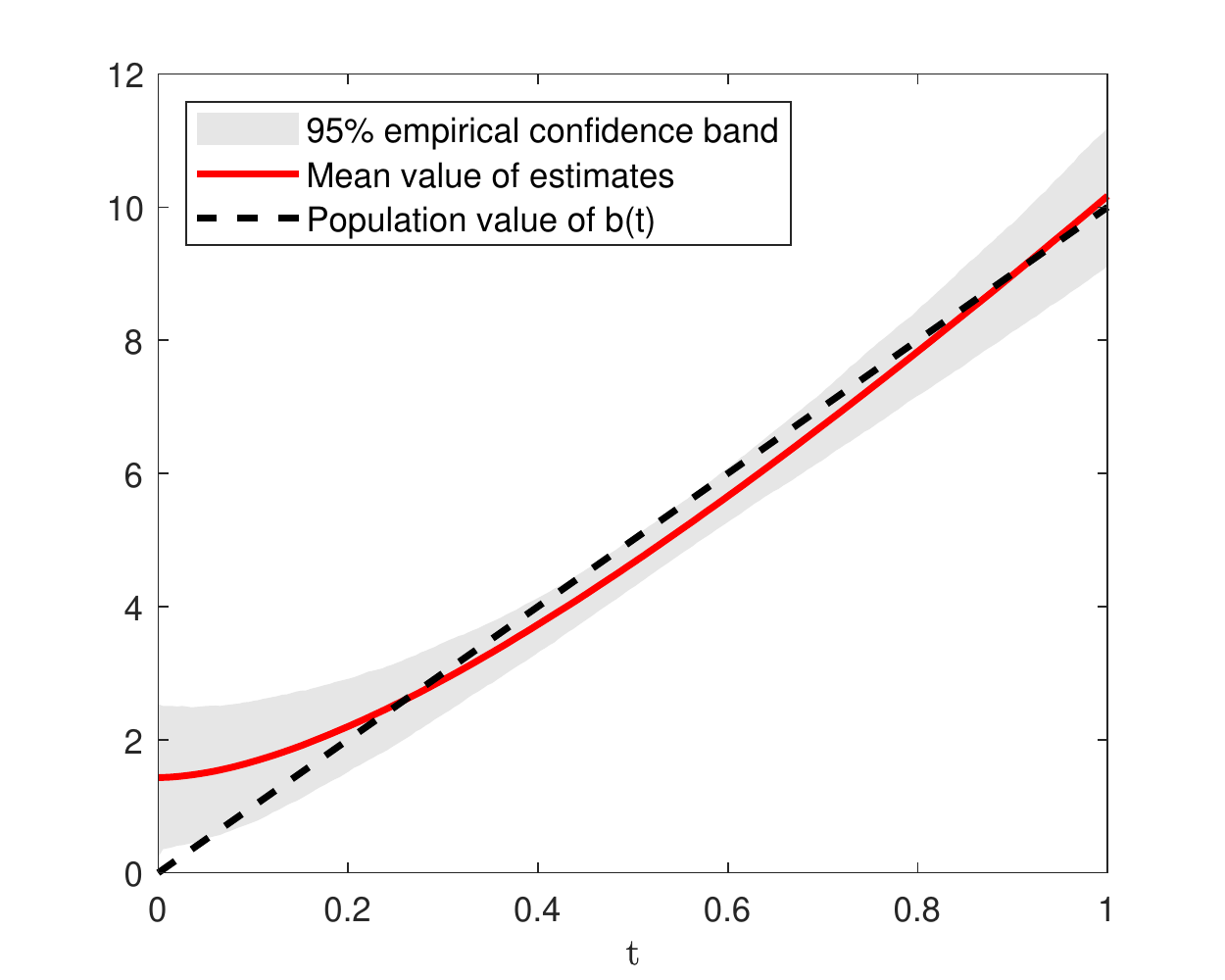}
		\caption{$\sigma=0.5,T=500$}
	\end{subfigure}
	\begin{subfigure}[b]{0.32\textwidth}
		\includegraphics[width=\textwidth]{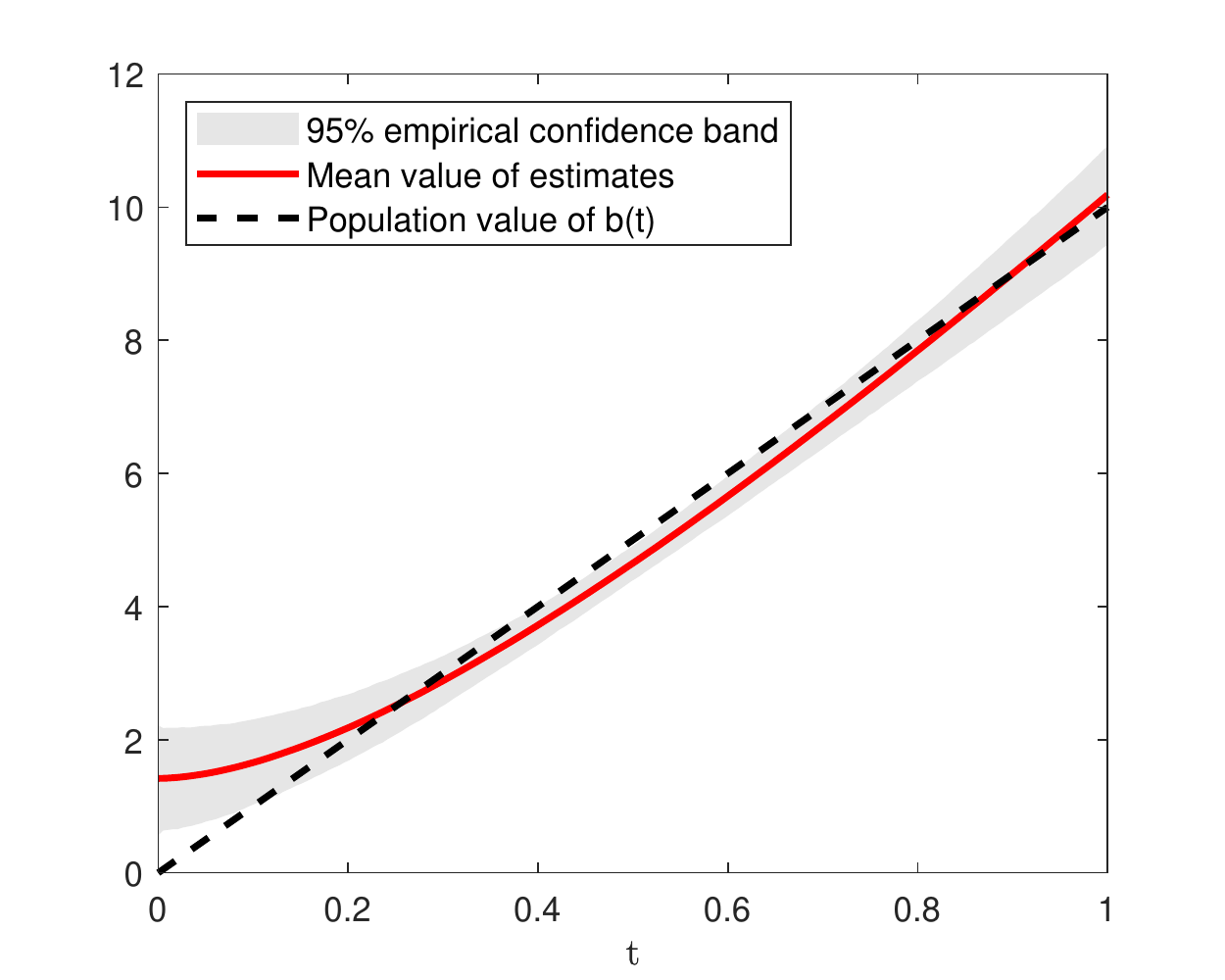}
		\caption{$\sigma=0.5,T=1000$}
	\end{subfigure}
	\begin{subfigure}[b]{0.32\textwidth}
		\includegraphics[width=\textwidth]{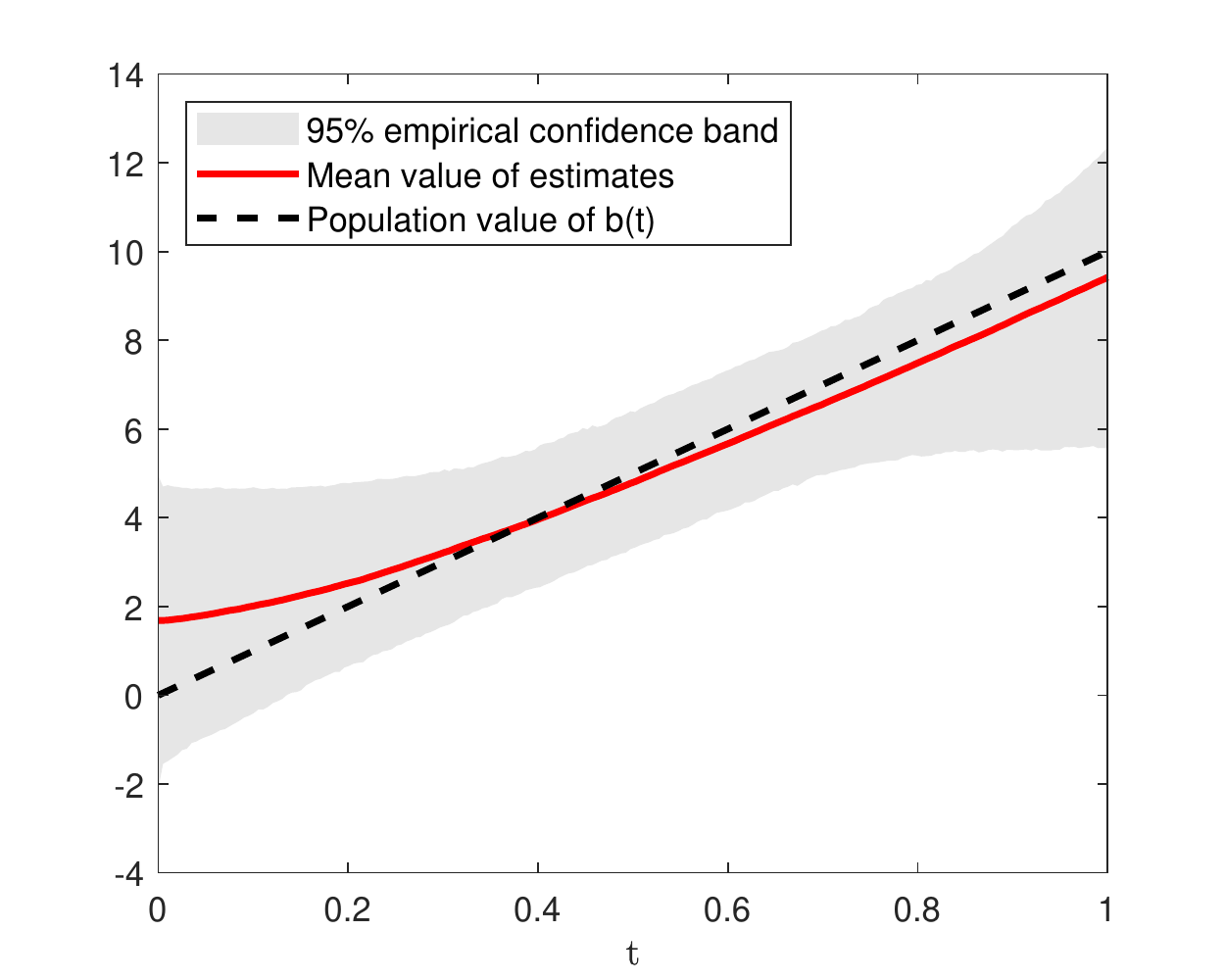}
		\caption{$\sigma=1,T=100$}
	\end{subfigure}
	\begin{subfigure}[b]{0.32\textwidth}
		\includegraphics[width=\textwidth]{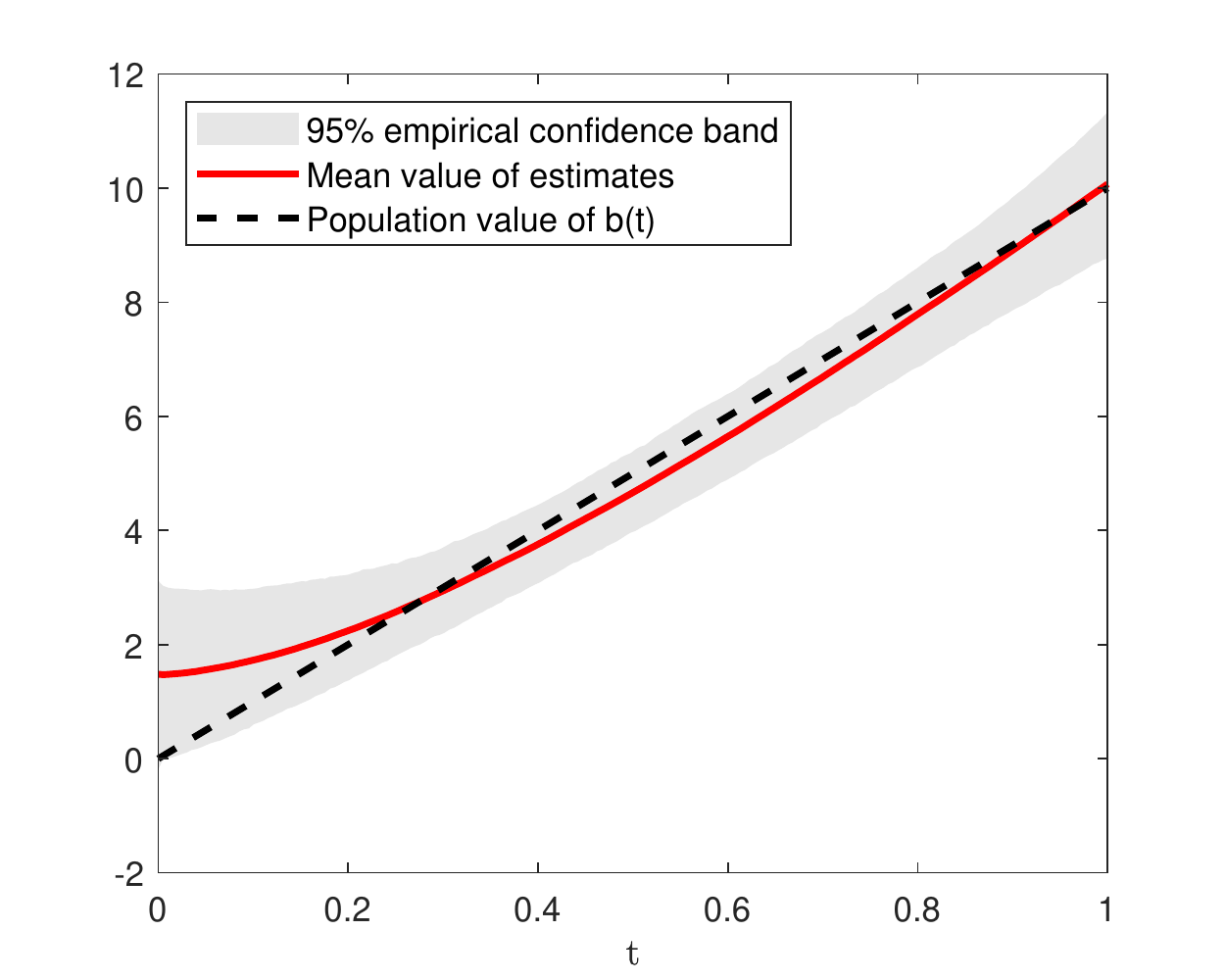}
		\caption{$\sigma=1,T=500$}
	\end{subfigure}
	\begin{subfigure}[b]{0.32\textwidth}
		\includegraphics[width=\textwidth]{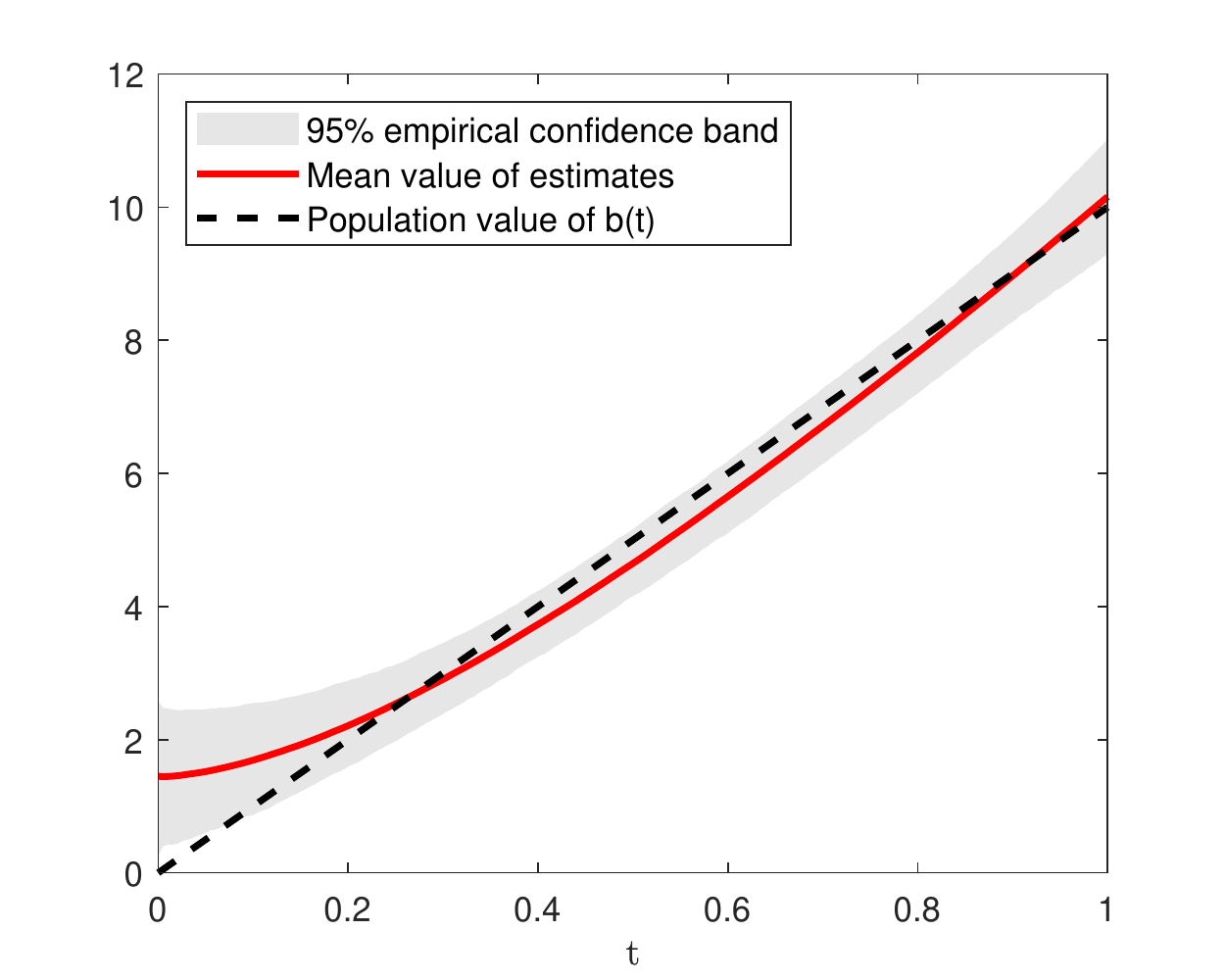}
		\caption{$\sigma=1,T=1000$}
	\end{subfigure}
	\caption{Summary of Monte Carlo experiments for $\beta(s)=10s$. Regularization parameter: $\alpha=10^{-6}$.}
	\label{fig:2_2}
\end{figure}

\section{Real time elasticity of electricity supply}\label{sec:application}
At the beginning of the 90s, electricity markets around the world were vertically integrated industries with prices set by regulators. Over the last 30 years, major countries experienced deregulation. Today, electricity is often sold at competitive spot markets where prices are determined according to the laws of supply and demand. Elasticities of supply and demand summarize the behavior of energy producers and consumers, inform market participants, and play an important role in the policy design, forecasting, and energy planning. The real-time elasticity of supply contains a piece of important information on seller's response to the intraday price fluctuations.\footnote{While there is an extensive literature on forecasting with intraday electricity data, see, e.g., \cite{aneiros2013functional} and references therein, the structural econometric analysis of the real-time electricity data received less attention, see \cite{benatia2017functional} and \cite{benatia2018functional} for notable exceptions. The latter paper studies the multi-unit electricity auction in New York and estimates the firm-level market power. It is also worth mentioning that the real-time price elasticities of demand have been previously estimated in \cite{patrick2001estimating} and \cite{lijesen2007real} relying on a different econometric methodology.}

Most of the electricity in Australia is generated, sold, and bought at the National Electricity Market (NEM), which is one of the largest interconnected electricity systems in the world. The NEM started operating as a wholesale spot market in December 1998. It supplies about 200 terawatt-hours of electricity to around 9 million customers each year reaching \$16.6 billion of trades in 2016-2017. The supply and the demand come from over 100 competitive generators and retailers participating in the market and are matched instantaneously in real time through a centrally coordinated dispatch process. Generators offer to supply a fixed amount of electricity at a specific time in the future and can resubmit subsequently the offered amount and price if needed. The Australian Energy Market Operator (AMEO) decides which generators will produce electricity to meet the demand in the most cost-efficient way.

We construct a new dataset using publicly available data from the AEMO and the Australian Bureau of Meteorology for the New South Wales in 1999-2018. The central pieces of the dataset are the daily aggregate quantities of the electricity sold at the spot market, intraday high-frequency prices measured each half an hour, and the average daily temperatures. The high-dimensional mixed-frequency IV regression model is
\begin{equation*}
	\log Q_{t} = \int_0^{24}\beta(s)\log P_t(s)\dx s + U_t,\qquad \E[U_t|W_t] = 0,
\end{equation*}
where $Q_t$ is the quantity sold on a day $t$, $P_t(s)$ is the price at time $s$ on a day $t$, and $W_t$ is an instrumental variable. To estimate the supply elasticity, we use the average daily temperature. Since the observed prices are measured with half an hour intervals, the regression equation is discretized as
\begin{equation*}
	\log Q_{t} = 0.5\sum_{j=1}^{48}\beta(s_j)\log P_t(s_j) + U_t,
\end{equation*}
where $s_j = 0.5j$ with $j=1,2,\dots,48$.

Figure~\ref{fig:q_and_p} displays the histogram of the natural logarithm of equilibrium quantities and the boxplot with equilibrium prices plotted against the hour. Figure~\ref{fig:temperature} displays the histogram of the temperature and the scatterplot with quantities plotted against the temperature. Marginal distributions seem to be well-behaved. The price series seems to be not stationary during the day with the median price peaking in the evening and plummeting during the night. There is also more volatility in the price in the evening. Note that such our assumptions do not rule out intraday nonstationarities. Figure~\ref{fig:temperature} (b) illustrates that the quantity sold is driven by heating and cooling demands. It is worth stressing that the temperature series is available at the daily frequency which is not allowed in \cite{florens2015instrumental}. At the same time, our mixed-frequency IV regression model allows to instrument intraday prices with the daily temperature series.
\begin{figure}
	\begin{subfigure}[b]{0.5\textwidth}
		\includegraphics[width=\textwidth]{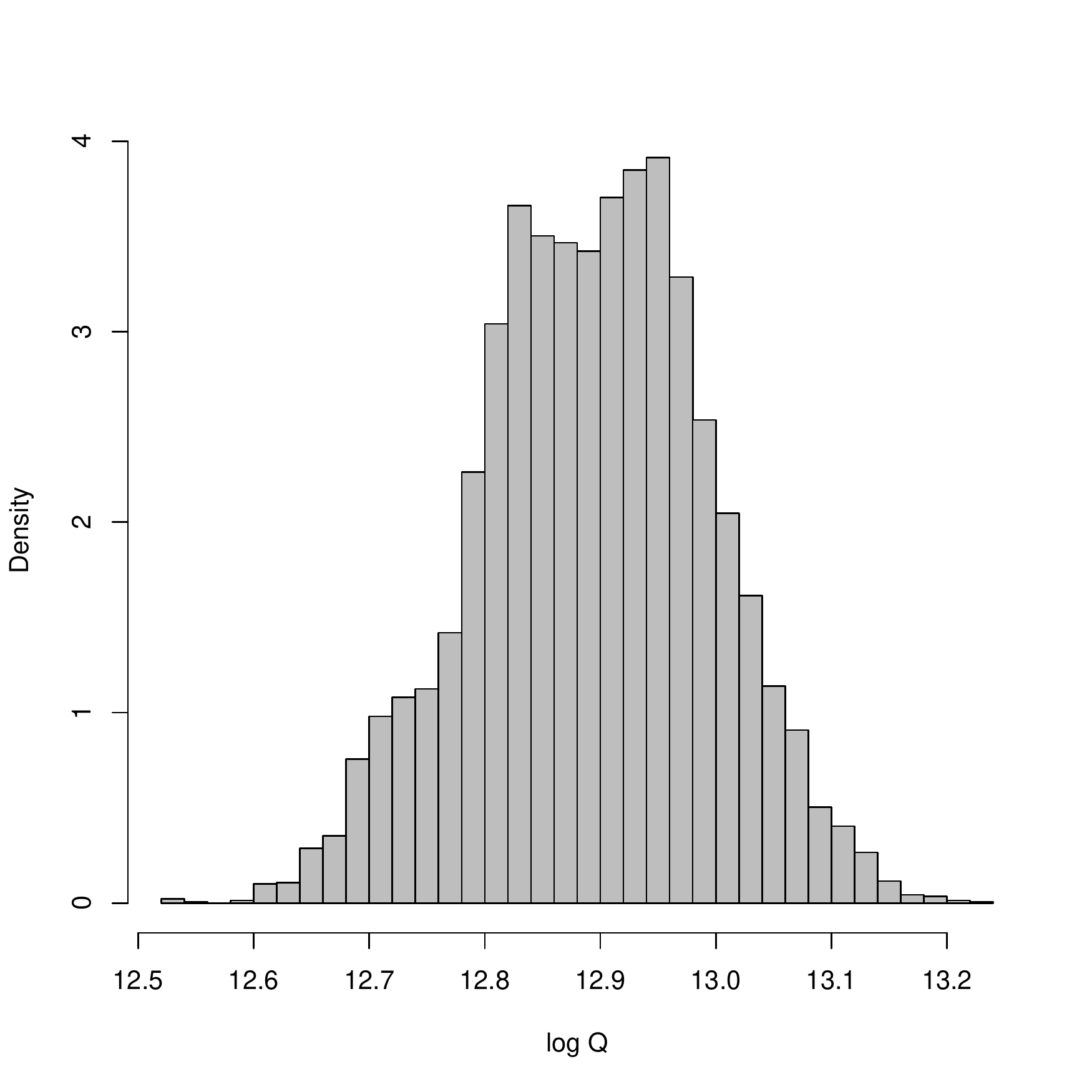}
		\caption{Distribution of quantities}
	\end{subfigure}
	\begin{subfigure}[b]{0.5\textwidth}
		\includegraphics[width=\textwidth]{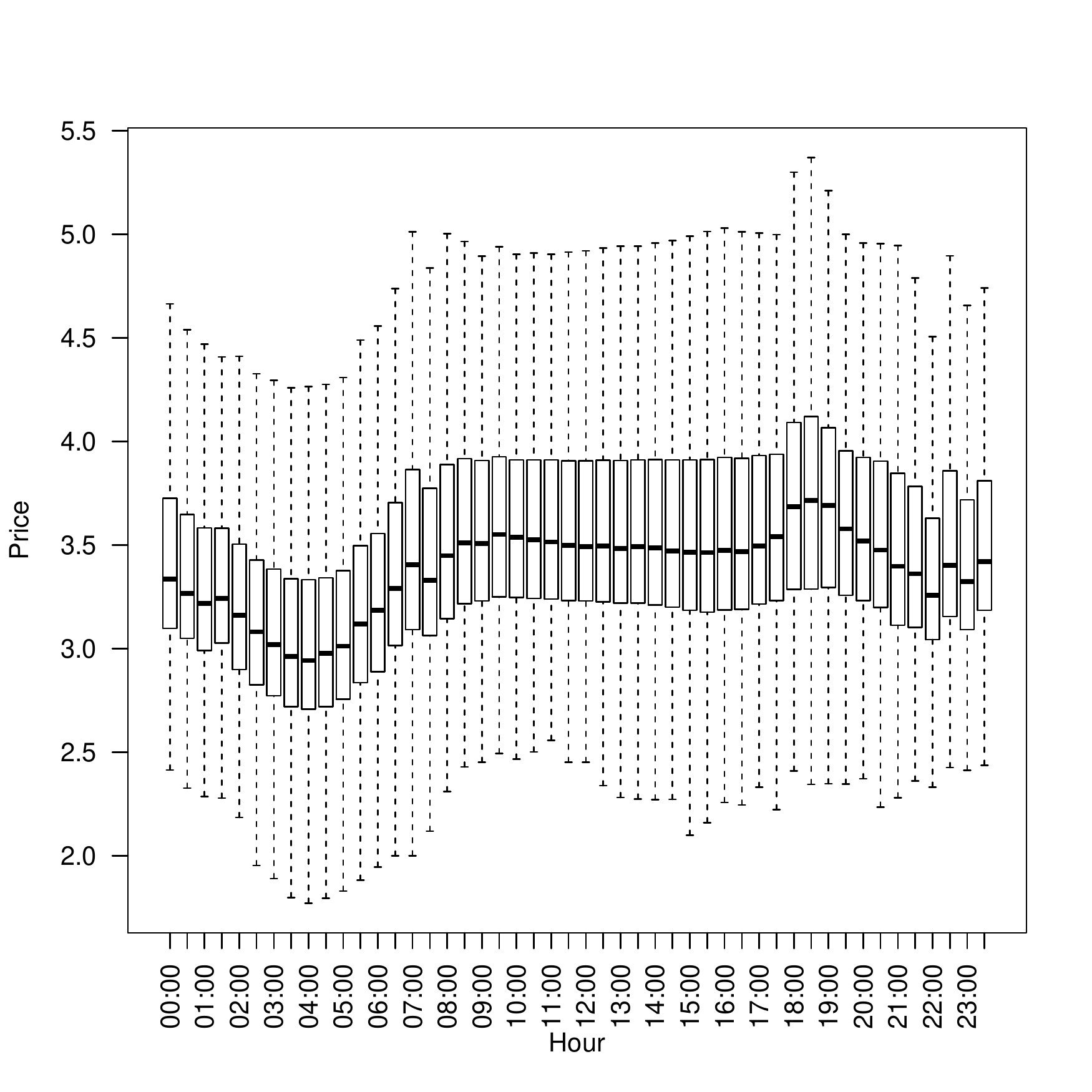}
		\caption{Distribution of prices}
	\end{subfigure}
	\caption{Quantities and Prices, in natural logarithms}
	\label{fig:q_and_p}
\end{figure}

\begin{figure}
	\begin{subfigure}[b]{0.5\textwidth}
		\includegraphics[width=\textwidth]{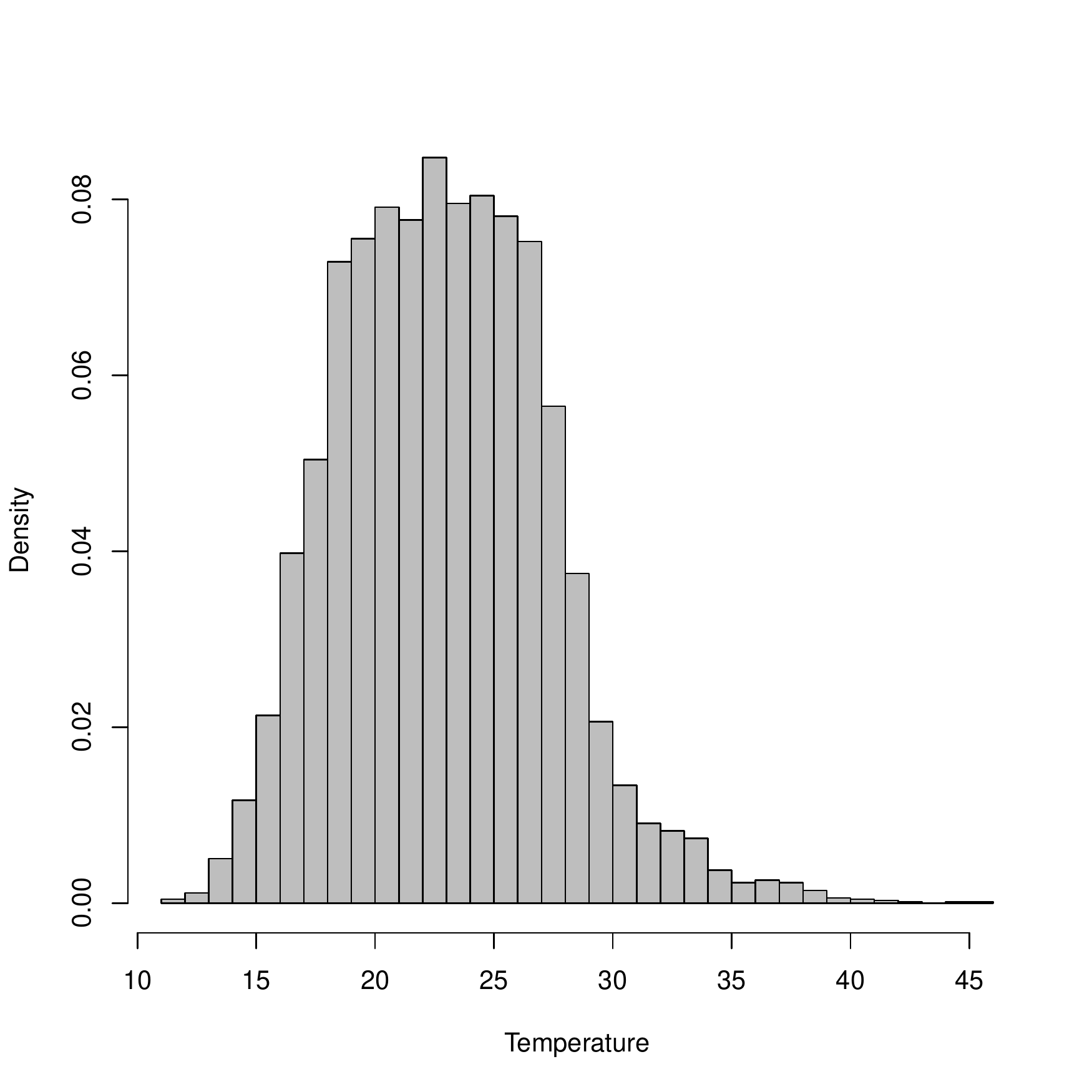}
		\caption{Temperature}
	\end{subfigure}
	\begin{subfigure}[b]{0.5\textwidth}
		\includegraphics[width=\textwidth]{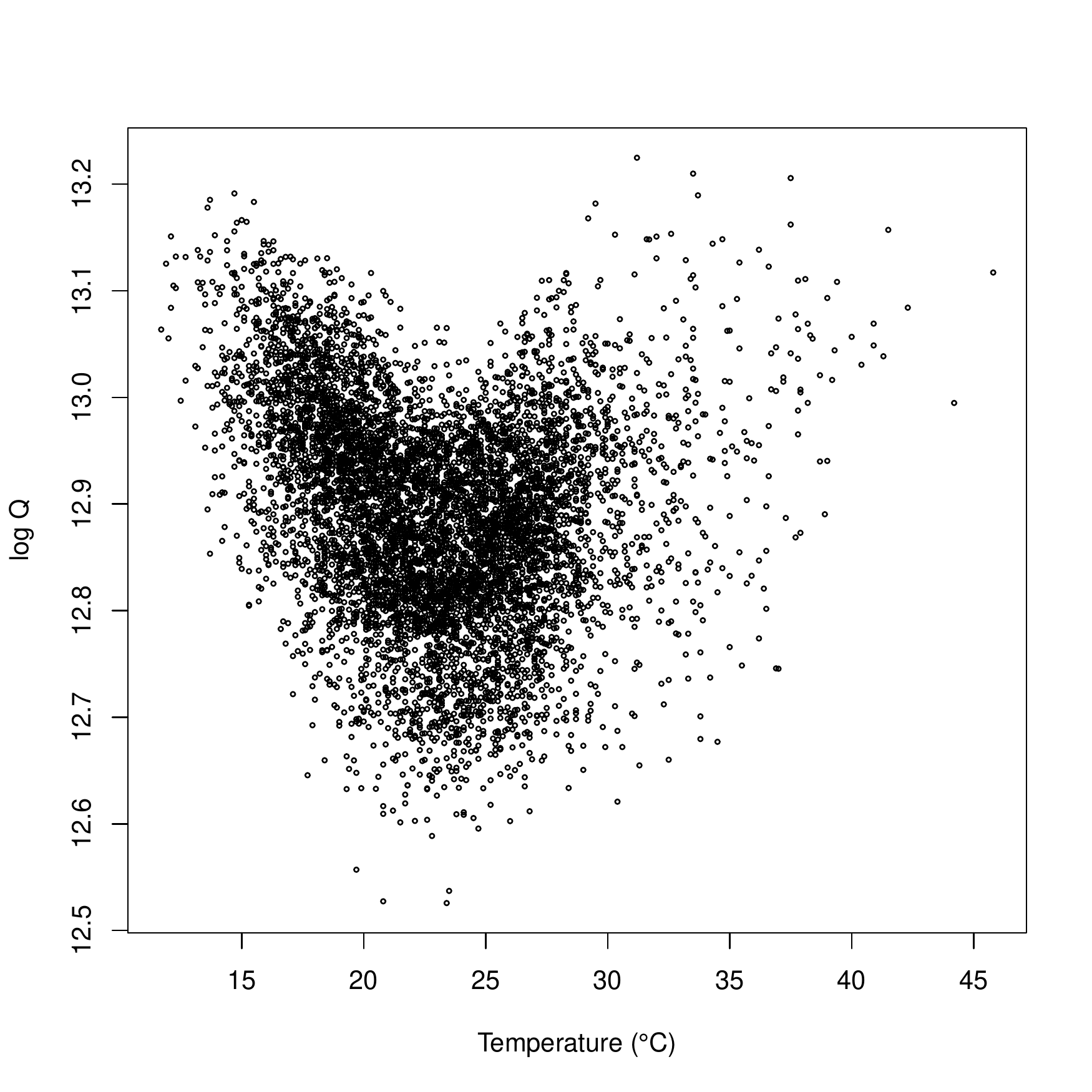}
		\caption{Temperature and Electricity}
	\end{subfigure}
	\caption{Temperature}
	\label{fig:temperature}
\end{figure}

To compute the estimator, we estimate the regularization parameter using a method similar to the one used in \cite{feve2010practice}. The method consists of minimizing the approximately scaled $L_2$ norm of the residual of the inverse problem 
\begin{equation*}
	RSS(\alpha) = \alpha^{-1}\|\hat K\hat\beta_\alpha - \hat g\|^2,
\end{equation*}
where $\hat\beta_\alpha = (\alpha I + \hat K^*\hat K)^{-1}\hat K^*\hat g$. In our case, the minimum is reached at $\alpha^*=2.16\times 10^{-3}$ as be seen from Figure~\ref{fig:estimates}, panel (a).

\begin{figure}
	\begin{subfigure}[b]{0.5\textwidth}
		\includegraphics[width=\textwidth]{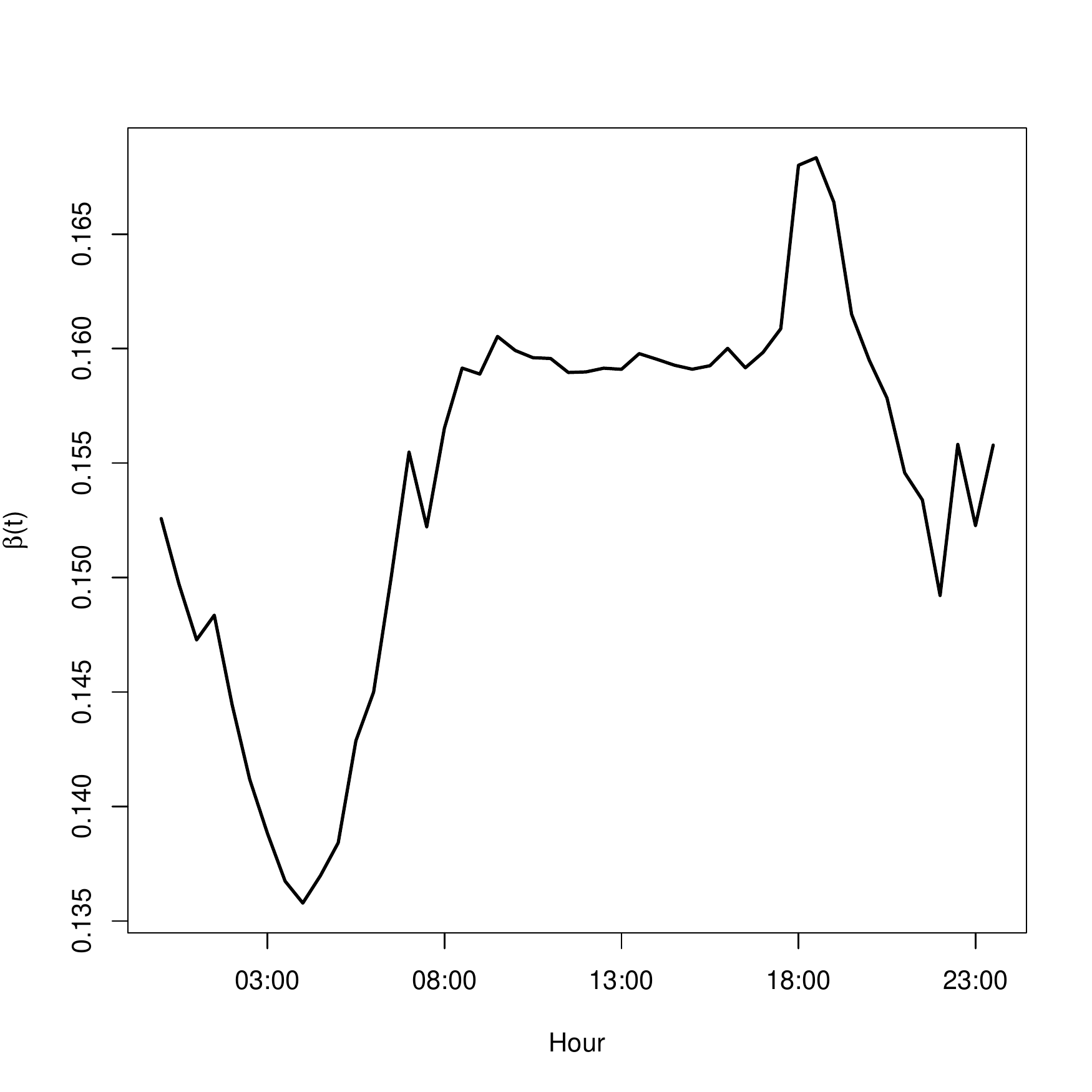}
		\caption{Elasticity of electricity supply}
	\end{subfigure}
	\begin{subfigure}[b]{0.5\textwidth}
		\includegraphics[width=\textwidth]{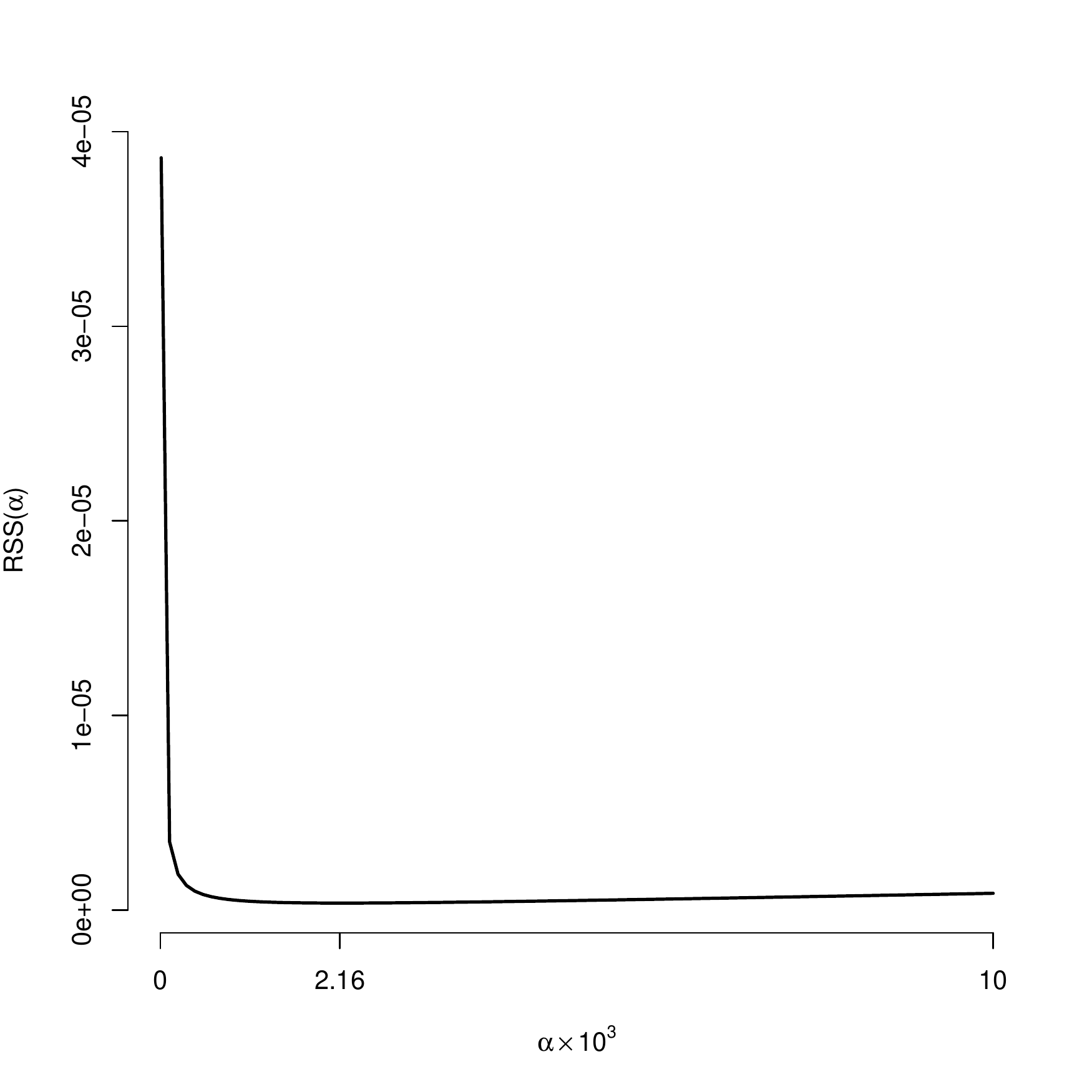}
		\caption{Residual curve}
	\end{subfigure}
	\caption{Estimator and optimal regularization parameter}
	\label{fig:estimates}
\end{figure}

Figure~\ref{fig:estimates}, panel (b) displays the estimated intraday elasticity of supply using our high-dimensional mixed-frequency IV regression. We find that depending on the hour, the price elasticity of supply ranges between 0.135 and 0.165. The elasticity is the highest in the evening, around 6 pm and the lowest during the night. The supply appears to have the real-time price elasticity of a similar order of magnitude as the demand.\footnote{\cite{patrick2001estimating} find real-time demand elasticities between 0 and -0.27 for 5 industrial sectors in UK.} The relatively inelastic supply may probably be attributed to the fact that the market participants are allowed to hedge financial risks and the difficulty to adjust the electricity production in real time.

\section{Conclusions}\label{section:conclusions}
This paper introduces a novel high-dimensional mixed-frequency IV regression and contributes to the growing literature on high-dimensional and mixed-frequency data. We show that the slope parameter of the high-dimensional endogenous regressor can be identified and accurately estimated leveraging on an instrumental variable observed at a low-frequency only.

We characterize the identifying condition in the model and study the statistical properties of the Tikhonov-regularized estimator with time series data. The mixed-frequency IV estimator has a closed-form expression and is easy and fast to compute numerically. Our statistical analysis does not restrict the number of high-frequency observations of the process and can handle the number of covariates increasing with the sample size even faster than exponentially.

In our empirical application, we estimate the real-time price elasticity of supply at the Australian electricity spot market. We find that the supply is relatively inelastic and that its elasticity is heterogeneous throughout the day. To conclude, we note that our identification strategy with a low-frequency IV can also be applied to the instrumental variable model of \cite{benatia2017functional} with a high-frequency dependent variable.

\newpage

\newpage
\setcounter{page}{1}
\setcounter{section}{0}
\setcounter{equation}{0}
\setcounter{table}{0}
\setcounter{figure}{0}
\renewcommand{\theequation}{A.\arabic{equation}}
\renewcommand\thetable{A.\arabic{table}}
\renewcommand\thefigure{A.\arabic{figure}}
\renewcommand\thesection{A.\arabic{section}}
\renewcommand\thepage{Appendix - \arabic{page}}
\renewcommand\thetheorem{A.\arabic{theorem}}

\begin{center}
	{\LARGE\textbf{APPENDIX}}	
\end{center}
\bigskip
\section{Proofs}\label{section:appendix}
\paragraph{Notation:} We use $L_2(S)$ to denote the space of functions on $S\subset \R^d$, square-integrable with respect to the Lebesgue measure. We endow the space $L_2(S)$ with the natural inner product $\langle \beta,\gamma\rangle=\int_{ S} \beta(s)\gamma(s)\dx s$ and the norm $\|\beta\|=\sqrt{\langle\beta,\beta\rangle}$ for all $\beta,\gamma\in L_2(S)$. Any vector $a\in\R^m$ should be considered as a column-vector and can be written as $a=(a_j)_{1\leq j\leq m}$. For a bounded linear operator $K:\mathcal{E}\to\mathcal{F}$ between the two Hilbert spaces $\mathcal{E}$ and $\mathcal{F}$, let $\|K\|_{\infty} = \sup_{\|\phi\|\leq 1}\|K\phi\|$ denote its operator norm. Let $\sigma(K^*K)$ denote the spectrum of the corresponding self-adjoint operator $K^*K$. The $m\times T$ matrix $A$ is written by enumerating all its elements $A=(A_{j,t})_{\substack{1\leq j\leq m\\1\leq t\leq T}}$. If $m=T$, then we simply write $A = (A_{ij})_{1\leq i,j\leq m,}$. We use
\begin{equation*}
	C_L^\kappa[0,1] = \left\{f:[0,1]\to \R:\; \max_{1\leq k\leq \lfloor\kappa\rfloor}\|f^{(k)}\|_\infty\leq L,\quad\sup_{s\ne s'}\frac{|f^{(\lfloor\kappa\rfloor)}(s) - f^{(\lfloor\kappa\rfloor)}(s')|}{|s-s'|^{\kappa - \lfloor\kappa\rfloor}} \leq L \right\}
\end{equation*}
to denote the space of H\"{o}lder continuous functions with common parameters $\kappa,L>0$. Lastly, for $a,b\in\R$, put $a\vee b = \max\{a,b\}$.

To prove Theorem~\ref{thm:main_tikhonov}, we need two auxiliary lemmas. The first lemma bounds the expected norm of the sample mean of a covariance stationary zero-mean $L_2(S)$-valued stochastic process $(X_t)_{t\in\Z}$ by the norm of its auto-covariance function $\gamma_h$.
\begin{lemma}\label{lemma:parametric_rate_ts}
	Suppose that $(X_t)_{t\in\Z}$ is a zero-mean covariance stationary process in $L_2(S)$ with absolutely summable autocovariance function
	\begin{equation*}
	\sum_{h\in\mathbf{Z}} \|\gamma_h\|_1 < \infty,
	\end{equation*}
	where $\|\gamma_h\|_1=\int_S|\gamma_h(s,s)|\dx s$. Then
	\begin{equation*}
		\E\left\|\frac{1}{T}\sum_{t=1}^TX_t\right\|^2 \leq \frac{1}{T}\sum_{h\in\mathbf{Z}} \|\gamma_h\|_1.
	\end{equation*}
\end{lemma}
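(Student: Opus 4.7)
The plan is to carry out the standard moment-of-the-sample-mean computation, adapted to an $L_2(S)$-valued process. Writing $\bar X_T = T^{-1}\sum_{t=1}^T X_t$, the first step is to use Fubini's theorem to interchange the expectation and the spatial integral, giving
\begin{equation*}
    \E\|\bar X_T\|^2 = \int_S \E[\bar X_T(s)^2]\,\dx s.
\end{equation*}
The interchange is justified because covariance stationarity ensures $\sup_t \E\|X_t\|^2 < \infty$, hence $\E\|\bar X_T\|^2 \leq T^{-1}\sum_{t=1}^T \E\|X_t\|^2 < \infty$ by Cauchy--Schwarz.

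Next, expand the square and use the zero-mean assumption together with the definition of the autocovariance function to get
\begin{equation*}
    \E[\bar X_T(s)^2] = \frac{1}{T^2}\sum_{t=1}^T\sum_{t'=1}^T \E[X_t(s)X_{t'}(s)] = \frac{1}{T^2}\sum_{t,t'=1}^T \gamma_{t-t'}(s,s).
\end{equation*}
Reindexing the double sum by $h = t - t'$ and using the combinatorial fact that the number of pairs $(t,t')$ with $1 \leq t,t' \leq T$ and $t - t' = h$ equals $T - |h|$ for $|h| < T$ yields
\begin{equation*}
    \E[\bar X_T(s)^2] = \frac{1}{T}\sum_{|h|<T}\left(1 - \frac{|h|}{T}\right)\gamma_h(s,s).
\end{equation*}

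Finally, I would integrate this identity in $s$ over $S$ and apply the triangle inequality. Since the left-hand side equals $\E\|\bar X_T\|^2 \geq 0$, it coincides with its absolute value, so
\begin{equation*}
    \E\|\bar X_T\|^2 \leq \frac{1}{T}\sum_{|h|<T}\left(1 - \frac{|h|}{T}\right)\left|\int_S \gamma_h(s,s)\,\dx s\right| \leq \frac{1}{T}\sum_{|h|<T}\|\gamma_h\|_1 \leq \frac{1}{T}\sum_{h\in\Z}\|\gamma_h\|_1,
\end{equation*}
where I dropped the factor $(1 - |h|/T) \leq 1$ and extended the sum to all of $\Z$ using absolute summability.

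There is no substantive obstacle here; the only technicality worth checking carefully is the Fubini interchange, which as noted reduces to the uniform second-moment bound implicit in covariance stationarity. The rest is bookkeeping: the zero-mean hypothesis lets one equate cross-moments with $\gamma_{t-t'}$, and the standard triangular reindexing converts the double sum into a one-sided lag sum that is dominated by $\sum_h \|\gamma_h\|_1$.
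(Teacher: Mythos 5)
Your proof is correct and follows essentially the same route as the paper's: expand $\E\|\bar X_T\|^2$ into the double sum of diagonal covariances via Fubini, invoke covariance stationarity to write these as $\gamma_{t-t'}(s,s)$, reindex by the lag $h$ with the $T-|h|$ count, and dominate by $\sum_{h\in\Z}\|\gamma_h\|_1$. The only cosmetic difference is that you integrate over $s$ after summing in $h$ while the paper sums after integrating, which is the same Fubini interchange.
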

\begin{proof}
	We have
	\begin{equation*}
	\begin{aligned}
		\E\left\|\frac{1}{T}\sum_{t=1}^TX_t\right\|^2 & = \frac{1}{T^2}\E\left\langle\sum_{t=1}^TX_t,\sum_{k=1}^TX_k\right\rangle \\
		& = \frac{1}{T^2}\sum_{t,k=1}^T\int_S\E[X_t(s)X_k(s)]\dx s \\
		& = \frac{1}{T^2}\sum_{t,k=1}^T\int_S\gamma_{t-k}(s)\dx s \\
		& = \frac{1}{T}\sum_{|h|<T}\frac{T-|h|}{T}\int_S\gamma_h(s,s)\dx s \\
		& \leq \frac{1}{T}\sum_{h\in\mathbf{Z}}\int_S\left|\gamma_h(s,s)\right|\dx s,
	\end{aligned}
	\end{equation*}
	where the second line follows by the bilinearity of the inner product and Fubini's theorem and the third under the covariance stationarity.
\end{proof}

The following lemma allows controlling estimation errors appearing in the proof of Theorem~\ref{thm:main_tikhonov} in terms of more primitive quantities.
\begin{lemma}\label{lemma:contiv}
	Suppose that $\hat k,k,\hat r,r,\beta$ are square-integrable. Then
	\begin{equation*}
		\E\left\|\hat K - K\right\|^2_\infty \leq \E\left\|\hat k - k\right\|^2
	\end{equation*}
	and
	\begin{equation*}
	\E\left\|\hat r - \hat K\beta\right\|^2 \leq 2\E\left\|\hat r - r\right\|^2 + 2\|\beta\|^2\E\left\|\hat k - k\right\|^2.
	\end{equation*}
\end{lemma}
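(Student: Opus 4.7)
The plan is to handle the two bounds separately, using only Cauchy--Schwarz/Fubini and the identity $r = K\beta$ that is built into the model.

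For the first bound, I will exploit that $\hat K - K$ is an integral operator whose kernel is precisely $\hat k - k$, so its operator norm is dominated by its Hilbert--Schmidt norm, which is just $\|\hat k - k\|$. Concretely, for any $\phi \in L_2$ with $\|\phi\|\leq 1$, Cauchy--Schwarz gives
\begin{equation*}
\bigl|((\hat K - K)\phi)(u)\bigr|^2 = \Bigl|\int (\hat k(s,u) - k(s,u))\phi(s)\,\dx s\Bigr|^2 \leq \|\phi\|^2 \int |\hat k(s,u) - k(s,u)|^2\,\dx s.
\end{equation*}
Integrating over $u$ and invoking Fubini yields $\|(\hat K - K)\phi\|^2 \leq \|\phi\|^2\,\|\hat k - k\|^2$, hence $\|\hat K - K\|_\infty^2 \leq \|\hat k - k\|^2$, and the inequality survives taking expectations.

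For the second bound, the key observation is that under the model $Y_t = \int \beta(s)Z_t(s)\,\dx s + U_t$ with $\E[U_t|W_t]=0$, the population object $r$ satisfies $r = K\beta$, because $r(u) = \E[Y_t\Psi(u,W_t)] = \int \beta(s)\E[Z_t(s)\Psi(u,W_t)]\,\dx s = (K\beta)(u)$ (the $U_t$-term vanishes by the law of iterated expectations). Therefore I can write
\begin{equation*}
\hat r - \hat K\beta = (\hat r - r) - (\hat K - K)\beta,
\end{equation*}
and apply the elementary bound $\|a+b\|^2 \leq 2\|a\|^2 + 2\|b\|^2$ together with $\|(\hat K - K)\beta\|^2 \leq \|\hat K - K\|_\infty^2 \|\beta\|^2$. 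Taking expectations and plugging in the first part of the lemma delivers exactly the stated inequality.

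There is no genuine obstacle here: both bounds are essentially algebraic manipulations combined with the standard Hilbert--Schmidt $\geq$ operator norm inequality. The only step worth flagging is the use of $r = K\beta$, which must be verified from the exogeneity condition $\E[U_t|W_t]=0$ rather than taken for granted; this is the reason the identity holds even though neither $\hat r$ nor $\hat K$ estimates anything involving $U_t$ directly.
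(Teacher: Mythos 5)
Your proof is correct and takes essentially the same route as the paper's: the first bound is the Cauchy--Schwarz/Fubini argument showing the operator norm of $\hat K - K$ is dominated by the $L_2$ norm of its kernel $\hat k - k$, and the second bound uses $r = K\beta$ together with $\|a+b\|^2 \leq 2\|a\|^2 + 2\|b\|^2$ and $\|(\hat K - K)\beta\| \leq \|\hat K - K\|_\infty\|\beta\|$. Your extra remark that $r = K\beta$ must be justified from the exogeneity condition is a fair point of care, but it matches the derivation the paper already gives in Eq.~\ref{eq:main_new}.
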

\begin{proof}
	By the definition of the operator norm and the Cauchy-Schwartz inequality
	\begin{equation}\label{eq:hsboudn}
	\begin{aligned}
	\E\left\|\hat K - K\right\|^2_\infty & = \E\left[\sup_{\|\phi\|\leq1}\left\|\hat K\phi - K\phi\right\|^2\right] \\
	& = \E\left[\sup_{\|\phi\|\leq1}\int\left|\int\phi(s)\left(\hat k(s,u) - k(s,u)\right)\dx s\right|^2\dx u\right] \\
	& \leq \E\left\|\hat k - k\right\|^2.
	\end{aligned}
	\end{equation}
	
	For the second part, use $r=K\beta$, $\|a+b\|^2\leq 2\|a\|^2+2\|b\|^2$, $\|K\beta\| \leq\|K\|_\infty\|\beta\|$, and the estimate in Eq.~\ref{eq:hsboudn}
	\begin{equation*}
	\begin{aligned}
	\E\left\|\hat r - \hat K\beta\right\|^2 & \leq 2\E\left\|\hat r - r\right\|^2 + 2\E\|\hat K\beta - K\beta\|^2 \\
	& \leq 2\E\left\|\hat r - r\right\|^2 + 2\|\beta\|^2\E\left\|\hat k - k\right\|^2.
	\end{aligned}
	\end{equation*}
\end{proof}

\begin{proof}[Proof of Theorem~\ref{thm:main_tikhonov}]
	The proof is based on the following decomposition
	\begin{equation*}
	\hat \beta - \beta = R_1 + R_2 + R_3 + R_4
	\end{equation*}
	with
	\begin{equation*}
	\begin{aligned}
	R_1 & = (\alpha I + \hat K^*\hat K)^{-1}\hat K^*(\hat r - \hat K\beta) \\
	R_2 & = \alpha(\alpha I + \hat K^*\hat K)^{-1}\hat K^*(\hat K - K)(\alpha I + K^* K)^{-1}\beta \\
	R_3 & = \alpha(\alpha I + \hat K^*\hat K)^{-1}(\hat K^* - K^*)K(\alpha I + K^* K)^{-1}\beta \\
	R_4 & = (\alpha I + K^*K)^{-1}K^*K\beta- \beta.
	\end{aligned}
	\end{equation*}
	To see that this decomposition holds, note that
	\begin{equation*}
	\begin{aligned}
	R_2 + R_3 & = \alpha(\alpha I + \hat K^*\hat K)^{-1}\left[\hat K^*\hat K - K^*K\right](\alpha I + K^* K)^{-1}\beta \\
	& = \alpha(\alpha I + \hat K^*\hat K)^{-1}\left[(\alpha I + \hat K^*\hat K) - (\alpha I + K^*K)\right](\alpha I + K^* K)^{-1}\beta \\ 
	& = \alpha(\alpha I + K^* K)^{-1}\beta - \alpha(\alpha I + \hat K^*\hat K)^{-1} \beta \\
	& = \left[I - \alpha(\alpha I + \hat K^*\hat K)^{-1}\right] \beta + \left[\alpha(\alpha I + K^* K)^{-1} - I\right]\beta\\
	& = (\alpha I + \hat K^*\hat K)^{-1}\hat K^*\hat K\beta - (\alpha I + K^*K)^{-1}K^*K\beta.
	\end{aligned}
	\end{equation*}	
	Therefore,
	\begin{equation*}
	\E\left\|\hat \beta - \beta\right\|^2 \leq 4\E\|R_1\|^2 + 4\E\|R_2\|^2 + 4\E\|R_3\|^2 + 4\E\|R_4\|^2.
	\end{equation*}
	
	The fourth term is a regularization bias and its order follows directly from the Assumption~\ref{as:source} and the isometry of the functional calculus
	\begin{equation*}
	\begin{aligned}
	\E\|R_4\|^2 & = \left\|\left[(\alpha I + K^*K)^{-1}K^*K - I\right]\beta\right\|^2 \\
	& \leq \left\|\left[I - (\alpha I + K^*K)^{-1}K^*K\right](K^*K)^{\gamma}\right\|^2_\infty R \\
	& = \sup_{\lambda\in\sigma(K^*K)}\left|\left(1 - \frac{\lambda}{\alpha + \lambda}\right)\lambda^{\gamma}\right|^2R \\
	& = \sup_{\lambda\in\sigma(K^*K)}\left|\frac{\lambda^{\gamma}}{\alpha + \lambda}\right|^2\alpha^2R.
	\end{aligned}
	\end{equation*}
	We can have two cases depending on the value of $\gamma>0$. For $\gamma\in(0,1)$, the function $\lambda\mapsto\frac{\lambda^{\gamma}}{\alpha+\lambda}$ admits maximum at $\lambda = \frac{\gamma}{1-\gamma}\alpha$. For $\gamma\geq1$, the function $\lambda\mapsto \frac{\lambda^\gamma}{\alpha + \lambda}$ is strictly increasing on $[0,\infty)$, attaining maximum at the end of the spectrum $\lambda=\|K^*K\|_\infty$.  Therefore, since $\gamma^{\gamma}(1-\gamma)^{1-\gamma}\leq 1,\gamma\in(0,1)$, we have
	\begin{equation*}
	\sup_{\lambda\in\sigma(K^*K)}\frac{\lambda^{\gamma}}{\alpha + \lambda} \leq \begin{cases}
	\|K^*K\|^{\gamma-1}, & \gamma\geq1	 \\
	\alpha^{\gamma-1}, & \gamma\in(0,1).
	\end{cases}
	\end{equation*}
	This gives $\E\|R_4\|^2 \leq \alpha^{2\gamma}R$ since $\gamma\in(0,1]$.
	
	Similar computations\footnote{Note that $\hat K$ is a finite-rank operator, hence, compact.} give
	\begin{equation*}
	\begin{aligned}
		\E\|R_1\|^2 & \leq \E\left\|(\alpha I + \hat K^*\hat K)^{-1}\hat K^*\right\|^2_\infty\left\|\hat r - \hat K\beta\right\|^2 \\
		& = \E\left[\sup_{\lambda\in\sigma(\hat K^*\hat K)}\left|\frac{\lambda^{1/2}}{\alpha+\lambda}\right|^2\left\|\hat r - \hat K\beta\right\|^2\right] \\
		& \leq \frac{2}{\alpha}\left(\E\left\|\hat r - r\right\|^2 + \|\beta\|^2\E\left\|\hat k - k\right\|^2\right),
	\end{aligned}
	\end{equation*}
	where the last inequality follows by Lemma~\ref{lemma:contiv}.
	
	Next,
	\begin{equation*}
	\begin{aligned}
	\E\|R_2\|^2 & \leq \E\left\|(\alpha I + \hat K^*\hat K)^{-1}\hat K^*\right\|_\infty^2\left\|\hat K  - K\right\|^2_\infty\left\|\alpha(\alpha I + K^*K)^{-1}(K^*K)^{\gamma}\right\|_\infty^2 R\\
	& \leq \E\left[\sup_{\lambda\in\sigma(\hat K^*\hat K)}\left|\frac{\lambda^{1/2}}{\alpha + \lambda}\right|^2\left\|\hat K - K\right\|^2_\infty \sup_{\lambda\in\sigma(K^*K)}\left|\frac{\lambda^{\gamma}}{\alpha + \lambda}\right|^2\alpha^2R\right]\\
	& \leq \frac{\alpha^{2\gamma}}{\alpha}\E\left\|\hat k - k\right\|^2R,
	\end{aligned}
	\end{equation*}
	where the last inequality follows by Lemma~\ref{lemma:contiv}.
	
	Likewise, for the third term, we have
	\begin{equation*}
	\begin{aligned}
	\E\|R_3\|^2 & = \E\left\|(\alpha I + \hat K^*\hat K)^{-1}(\hat K^* - K^*)\alpha K(\alpha I + K^*K)^{-1}\beta\right\|^2 \\
	& \leq \E\left\|(\alpha I + \hat K^*\hat K)^{-1}\right\|^2_\infty\left\|\hat K^*  - K^*\right\|^2_\infty \left\|K(\alpha I + K^*K)^{-1}(K^*K)^{\gamma}\right\|_\infty^2\alpha^2R \\
	& = \E\left[\sup_{\lambda\in\sigma(\hat K^*\hat K)}\left|\frac{1}{\alpha + \lambda}\right|^2\left\|\hat K^* - K^*\right\|^2_\infty\sup_{\lambda\in\sigma(K^*K)}\left|\frac{\lambda^{\gamma+1/2}}{\alpha + \lambda}\right|^2\alpha^2R\right] \\
	& \leq \sup_{\lambda\in\sigma(K^*K)}\left|\frac{\lambda^{\gamma+1/2}}{\alpha + \lambda}\right|^2\E\left\|\hat K^* - K^*\right\|^2_\infty R \\
	& \leq \frac{\alpha^{2\gamma\wedge 1}}{\alpha}\|K^*K\|^{(2\gamma - 1)\vee0}\E\left\|\hat k - k\right\|^2R.
	\end{aligned}
	\end{equation*}
	
	Lastly, let $\gamma_h^{(1)}$ be the autocovariance function of the process $(Y_t\Psi(.,W_t))_{t\in\Z}$ and let $\gamma_h^{(2)}$ be the autocovariance function of the process $(Z_t(.),\Psi(.,W_t))_{t\in\Z}$. Put
	\begin{equation*}
	\eta_1 = \sum_{h\in\Z}\|\gamma_{h}^{(1)}\|_1 \qquad \text{and}\qquad \eta_2 = \sum_{h\in\Z}\|\gamma_{h}^{(2)}\|_1.
	\end{equation*}
	Then under Assumption~\ref{as:data_ts} by Lemma~\ref{lemma:parametric_rate_ts}
	\begin{equation*}\label{eq:ghat}
	\begin{aligned}
	\E\left\|\hat r - r\right\|^2 & = \E\left\|\frac{1}{T}\sum_{t=1}^T\left\{Y_t\Psi(.,W_t) - \E[Y_t\Psi(.,W_t)]\right\}\right\|^2 \\
	& \leq \frac{\eta_1}{T}
	\end{aligned}
	\end{equation*}
	and
	\begin{equation*}\label{eq:khat}
	\begin{aligned}
	\E\left\|\hat k - k\right\|^2 & = \E\left\|\frac{1}{T}\sum_{t=1}^T\left\{Z_t(.)\Psi(.,W_t) - \E[Z(.)\Psi(.,W)]\right\}\right\|^2\\
	& \leq \frac{\eta_2}{T}.
	\end{aligned}
	\end{equation*}
	Combining all the estimates, we obtain
	\begin{equation}\label{eq:constant}
	\E\left\|\hat \beta - \beta\right\|^2 \leq \frac{8(\eta_1 + \eta_2\|\beta\|^2)}{\alpha T} + \frac{4R\eta_2}{\alpha T}\left(\alpha^{2\gamma} + \alpha^{2\gamma\wedge 1}\|K^*K\|^{(2\gamma-1)\vee 0}\right) + 4\alpha^{2\gamma}R.
	\end{equation}
\end{proof}

\begin{proof}[Proof of Theorem~\ref{thm:main_tikhonov_discretization}]
	Decompose
	\begin{equation*}
	\hat\beta_m - \beta = \hat\beta_m - \hat\beta + \hat\beta - \beta.
	\end{equation*}
	By Theorem~\ref{thm:main_tikhonov}, we know that $\E\|\hat\beta - \beta\|^2 = O\left(\frac{1}{\alpha T} + \alpha^\gamma\right)$. Consequently, it remains to control $\E\|\hat\beta_m - \hat\beta\|^2$. To that end, note that if $\hat\psi_m$ solves
	\begin{equation*}
	(\alpha  I + \hat K_m\hat K^*)\hat\psi_m = \hat r,
	\end{equation*}
	then $\hat\beta_m = \hat K^*\hat\psi_m$. Therefore,
	\begin{equation*}
	\hat\beta_m = \hat K^*(\alpha I + \hat K_m\hat K^*)^{-1}\hat r.
	\end{equation*}
	Next, decompose
	\begin{equation*}
	\begin{aligned}
	\hat\beta_m - \hat \beta & = \hat K^*(\alpha I + \hat K_m\hat K^*)^{-1}\hat r - \hat K^*(\alpha I + \hat K\hat K^*)^{-1}\hat r \\
	& = \hat K^*\left[(\alpha I + \hat K_m\hat K^*)^{-1} - (\alpha I + \hat K\hat K^*)^{-1}\right]\hat r \\
	& = \hat K^*(\alpha I + \hat K\hat K^*)^{-1}\left[\hat K\hat K^* - \hat K_m \hat K^*\right](\alpha I + \hat K_m\hat K^*)^{-1}\hat r \\
	& = \hat K^*(\alpha I + \hat K\hat K^*)^{-1}(\hat K - \hat K_m)\hat K^*(\alpha I + \hat K_m\hat K^*)^{-1}\hat r.
	\end{aligned}
	\end{equation*}
	Then
	\begin{equation*}
	\begin{aligned}
	\left\|\hat\beta_m - \hat\beta\right\|^2 & \leq \left\|\hat K^*(\alpha I + \hat K\hat K^*)^{-1}\right\|^2_\infty\left\|(\hat K - \hat K_m)\hat K^*\right\|^2_\infty\left\|(\alpha I + \hat K_m\hat K^*)^{-1}\right\|^2_\infty\|\hat r\|^2 \\
	& \leq \frac{\|\hat r\|}{4\alpha^3}\left\|(\hat K - \hat K_m)\hat K^*\right\|_\infty^2.
	\end{aligned}
	\end{equation*}
	Next, the expression inside of the operator norm is the integral operator on $L_2$
	\begin{equation*}
	(\hat K - \hat K_m)\hat K^*\psi = \int\psi(u)\left(\int \hat k(s,v)\hat k(s,u)\dx s - \sum_{j=1}^m\hat k(s_j,v)\hat k(s_j,u)\delta_j\right)\dx u.
	\end{equation*}
	Therefore, by the same computations as in Eq.~\ref{eq:hsboudn} and the triangle inequality
	\begin{equation*}\small
	\begin{aligned}
	\left\|(\hat K - \hat K_m)\hat K^*\right\|_\infty & \leq \left\|\int\hat k(s,.)\hat k(s,.)\dx s - \sum_{j=1}^m\hat k(s_j,.)\hat k(s_j,.)\delta_j\right\| \\
	& = \left\|\frac{1}{T^2}\sum_{t=1}^T\sum_{k=1}^T\Psi(.,W_t)\Psi(.,W_k)\left\{\int Z_t(s)Z_k(s)\dx s - \sum_{j=1}^mZ_t(s_j)Z_k(s_j)\delta_j\right\} \right\| \\
	& \leq \frac{1}{T^2}\sum_{t=1}^T\sum_{k=1}^T\|\Psi(.,W_t)\|\|\Psi(.,W_k)\|\left|\int Z_t(s)Z_k(s)\dx s - \sum_{j=1}^mZ_t(s_j)Z_k(s_j)\delta_j\right| \\
	& \leq \max_{1\leq t\leq T}\|\Psi(.,W_t)\|^2\max_{1\leq t,k\leq T}\left|\sum_{j=1}^mZ_t(s_j)Z_k(s_j)\delta_j - \int Z_t(s)Z_k(s)\dx s\right|.
	\end{aligned}
	\end{equation*}
	Under Assumption~\ref{as:discretization} (i)
	\begin{equation*}
	\begin{aligned}
	\left|Z_t(s_j)Z_k(s_j) - Z_t(s)Z_k(s)\right| & \leq |Z_t(s_j) - Z_t(s)||Z_k(s_j)| + |Z_t(s)||Z_k(s_j) - Z_k(s)| \\
	& \leq 2L^2|s_j - s|^\kappa,
	\end{aligned}
	\end{equation*}
	and whence
	\begin{equation*}
	\begin{aligned}
	\left|\sum_{j=1}^mZ_t(s_j)Z_k(s_j)\delta_j - \int Z_t(s)Z_k(s)\dx s\right| & = \left|\sum_{j=1}^m\int_{s_{j-1}}^{s_j}\left\{Z_t(s_j)Z_k(s_j) - Z_t(s)Z_k(s)\right\}\dx s\right| \\
	& \leq \sum_{j=1}^m\int_{s_{j-1}}^{s_j}\left|Z_t(s_j)Z_k(s_j) - Z_t(s)Z_{k}(s)\right|\dx s \\
	& = 2L^2\sum_{j=1}^m\int_{s_{j-1}}^{s_j}|s_j - s|^\kappa\dx s \\
	& \leq 2L^2\max_{1\leq j\leq m}\delta_j^\kappa.
	\end{aligned}
	\end{equation*}
	Therefore,
	\begin{equation*}
	\begin{aligned}
		\E\left\|\hat\beta_m - \hat\beta\right\|^2 & \leq \frac{\E\|\hat r\|}{\alpha^3}L^4\bar\Psi^2\max_{1\leq j\leq m}\delta_j^{2\kappa} \\
		& = O\left(\frac{\Delta_m^{2\kappa}}{\alpha^3}\right) \\
		& = O\left(\frac{1}{\alpha T}\right),
	\end{aligned}
	\end{equation*}
	where the second line follows under Assumption~\ref{as:data_ts} by Lemma~\ref{lemma:parametric_rate_ts} and the last under Assumption~\ref{as:discretization} (iii).
\end{proof}

\end{document}